\newtheorem{theorem}{Theorem}[section]
\newtheorem{lemma}[theorem]{Lemma}
\newtheorem{remark}[theorem]{Remark}
\theoremstyle{definition}
\theoremstyle{definition}
\theoremstyle{definition}
\newcommand{\I}{\mathcal{I}}
\newcommand{\F}{\mathcal{F}}
\newcommand{\G}{\mathcal{G}}
\newcommand{\W}{\mathcal{W}}
\newcommand{\E}{\mathbb{E}}
\newcommand{\PP}{\mathbb{P}}
\numberwithin{equation}{section}
 \tikzset{every node/.style={auto}}
 \tikzset{every state/.style={rectangle, minimum size=0pt, draw=none, font=\Large}}
  \tikzset{bend angle=7}
\newcommand{\xrightleftarrows}[2]{%
  \mathrel{\mathop{%
    \vcenter{\offinterlineskip\m@th
      \ialign{\hfil##\hfil\cr
        \hphantom{$\scriptstyle\mspace{8mu}{#1}\mspace{8mu}$}\cr
        \bigarrowfill\cr
        \vrule height0pt width 2em\cr
        \bigarrowfill\cr
        \hphantom{$\scriptstyle\mspace{8mu}{#2}\mspace{8mu}$}\cr
        \noalign{\kern-0.3ex}
      }%
    }%
  }\limits^{#1}_{#2}}%
}
\begin{document}

\title{A disease-spread model on hypergraphs with distinct droplet and aerosol transmission modes}

\author[1]{Tung D. Nguyen} 
\affil[1]{Department of Mathematics, University of California Los Angeles, CA, USA}
\author[2]{Mason A. Porter}
\affil[2]{Department of Mathematics, University of California Los Angeles, CA, USA; Department of Sociology, University of California Los Angeles, CA, USA; Santa Fe Institute, Santa Fe, NM, USA}

\date{}
\maketitle


\begin{abstract}

We examine the spread of an infectious disease, such as one that is caused by a respiratory virus, with two distinct modes of transmission. To do this, we consider a susceptible--infected--susceptible (SIS) disease on a hypergraph, which allows us to incorporate the effects of both dyadic (i.e., pairwise) and polyadic (i.e., group) interactions on disease propagation. This disease can spread either via large droplets through direct social contacts, which we associate with edges (i.e., hyperedges of size 2), or via infected aerosols in the environment through
hyperedges of size at least 3 (i.e., polyadic interactions). 
We derive mean-field approximations of our model for two types of hypergraphs, and we obtain threshold conditions that characterize whether the disease dies out or becomes endemic. Additionally, we numerically simulate our model and a mean-field approximation of it to examine the impact of various factors, such as hyperedge size (when the size is uniform), hyperedge-size distribution (when the sizes are nonuniform), and hyperedge-recovery rates (when the sizes are nonuniform) on the disease dynamics.

\vspace{.1in}
\noindent \textbf{Relevance to Life Sciences}
\vspace{.1in}

In our paper, we formulate a disease-spread model on networks that distinguishes explicitly between two distinct modes of transmission: (1) spread via large droplets (which involve direct social contacts); and (2) spread via aerosols in the environment. Our model includes several important biological features and allows us to draw a variety of relevant biological conclusions.
First, we separate the effects of the two modes of transmission on the disease dynamics. Second, we obtain useful insights that depend on the structure of the hypergraph on which a disease spreads.
For hypergraphs with uniform hyperedge sizes (i.e., group sizes), we find that reducing this size may be the most effective way to eradicate the examined disease.
For hypergraphs with nonuniform hyperedge sizes, we find that having a few very large groups and many small groups (i.e., a heavy-tailed hyperedge-size distribution) leads to longer times before disease eradication (or even cause the disease to become endemic) in comparison to having uniform, medium-sized groups.
Finally, if one wants to combat the disease by increasing group recovery rates (e.g., by improving ventilation or air filtration), we find that increasing these rates in a way that depends on group sizes is just as effective
but possibly less costly than increasing them uniformly across all groups.

\vspace{.1in}
\noindent\textbf{Mathematical Content}
\vspace{.1in}

We formulate a stochastic susceptible--infected--susceptible (SIS) model on a hypergraph in which both nodes and groups (i.e., hyperedges of size at least 3) have associated state variables. To date, researchers have analyzed very few network models in which both nodes and hyperedges --- as opposed to only nodes --- have their own state variables.
We derive mean-field approximations of our disease-spread model for complete uniform hypergraphs and regular uniform hypergraphs, and we establish threshold conditions for the SIS disease to persist 
for these approximations. Guided by our mean-field approximations, we perform individual-level stochastic simulations to study how various network parameters and other model parameters affect the disease dynamics. We illustrate that both dyadic interactions and polyadic interactions play an important role in the disease dynamics.

\end{abstract}



\section{Introduction}

Mathematical modeling is an important approach to help understand, forecast, and control the spread of infectious diseases \cite{brauer2017mathematical,brauer2019}. Mathematical modeling of disease dynamics has a long history that goes back to the development of deterministic compartmental models about 100 years ago~\cite{kermack1927contribution}. Since then, there have been numerous developments in the modeling of disease dynamics, including the incorporation of stochastic effects~\cite{allen2008introduction,andersson2012stochastic}, the use of network structure to account for contacts and interactions between individuals~\cite{keeling2005networks,pastor2015epidemic,kiss2017,porter2016}, and the consideration of application-oriented details (such as {physical distancing~\cite{sun2011effect}, face-mask use~\cite{eikenberry2020mask}, and building ventilation~\cite{gao2016building}}).

Traditionally, one can model disease spread on a graph with nodes that represent individuals and edges that encode dyadic (i.e., pairwise) interactions between individuals \cite{ganesh2005effect,wang2003epidemic}. Recently, there has been growing interest in using hypergraphs to incorporate polyadic interactions (which are sometimes also called ``higher-order" interactions and often encode group interactions) in network models of disease spread~\cite{battiston2020networks,bick2023higher,bodo2016sis, cisneros2021multigroup, higham2021epidemics, landry2020effect}. See \cite{battiston2020networks,battiston2021physics,majhi2022dynamics,bick2023higher,battiston2025} for wide-ranging discussions of dynamical processes on polyadic networks.

One can examine disease spread on hypergraphs in several ways~\cite{battiston2025}. Some researchers have examined such dynamics
without distinguishing between transmission via dyadic and polyadic interactions~\cite{bodo2016sis,higham2021epidemics}.
Other researchers have distinguished between dyadic and polyadic interactions by using different infection rates for the two types of interactions~\cite{landry2020effect}. However, for some diseases, there are more fundamental differences between their propagation through dyadic and polyadic channels. 
For example, respiratory viruses like COVID-19 can spread through large droplets during close contact between individuals or through airborne transmission with infected aerosols \cite{domingo2020influence}. These two modes of transmission have distinct mechanisms, and they are affected by different human and environmental factors \cite{wang2021airborne}. Some disease-mitigation policies, such as different types of non-pharmaceutical interventions, can be more effective against one infection mode than the other mode.
For instance, physical distancing alone can help reduce transmission through droplets, but it may be less effective at preventing disease transmission via aerosols in poorly ventilated buildings \cite{pei2021human}.  

In the present paper, we formulate and analyze a disease-spread model on hypergraphs with two distinct transmission modes.
In our model, size-2 hyperedges (i.e., ordinary edges) encode dyadic interactions and represent direct contacts between individuals, and size-$s$ hyperedges with $s \geq 3$
encode polyadic interactions and represent a physical environment like a household or a workplace (rather than a social group of individuals).
We associate the droplet mode of disease transmission with size-2 hyperedges and the aerosol mode of disease transmission with all other hyperedges. Furthermore, in addition to assigning states (susceptible or infected) to the nodes, we also assign states (contaminated or uncontaminated) to the hyperedges to describe situations in which an environment has a high or low concentration of infected aerosols.
The idea of assigning states to environments has been considered in compartmental models in epidemiology (e.g., see \cite{feng2013mathematical,garira2014mathematical,lanzas2020modelling}). In the present paper, we
extend this idea to network models. 
Our assignment of states to hyperedges is also inspired by recent work on opinion dynamics on hypergraphs \cite{sampson2024oscillatory} and on synchronization on simplicial complexes \cite{millan2025topology}. 

We study our stochastic disease-spread model both analytically and numerical to gain insights into its dynamics. We derive mean-field approximations of it for two different types of hypergraphs, and we derive
expressions for basic reproduction number $R_0$ for both situations. We thereby examine when a disease persists (which occurs when
$R_0 > 1$) and when it dies out (which occurs when $R_0 < 1$). {In both mean-field models, the expression for $R_0$} is the sum of two terms that arise from separate contributions of the two disease-transmission modes.
In direct numerical simulations of our stochastic disease-spread model, we also observe some separation of the two transmission modes, although it is weaker than in the mean-field approximations.

Using both direct numerical simulations and our mean-field approximations, we
study the effects of hyperedge sizes and hyperedge-recovery rates (i.e., the rates at which contaminated environments become uncontaminated with infected aerosols). For networks with uniform hyperedge sizes, reducing this size is a very effective way to reduce the basic reproduction number $R_0$. 
For networks with nonuniform hyperedge sizes, we observe {in our numerical simulations} that
having a few very large hyperedges can cause disease eradication to take longer (and can even lead to disease persistence) than
having uniform, medium-sized hyperedges. 
Finally, we find that nonuniformly increasing hyperedge-recovery rates (e.g., by improving ventilation or air filtration) as a function of hyperedge sizes can be an effective way to mitigate disease spread.

Our paper proceeds as follows. In Section \ref{sec:model}, we introduce our stochastic disease-spread model and discuss several random-hypergraph models on which we simulate the disease dynamics. In Section \ref{sec:meanfield}, we derive two mean-field approximations of our model and establish threshold conditions for the examined disease to become extinct or to persist. 
In Section \ref{sec:simulation}, we perform numerical simulations on our stochastic model to test the accuracy
of our mean-field approximations and to investigate the effect of network structure on disease dynamics. 
Finally, in Section \ref{sec:discussion}, we summarize our findings and discuss both potential implication of our results and future directions. Our code is available at \url{https://github.com/TungDaoNguyen/SIS_Hypergraph}.


\section{A stochastic disease model on hypergraphs with two modes of transmission}\label{sec:model}

In this section, we describe our susceptible--infected--susceptible (SIS) disease model on hypergraphs with two modes of transmission. We review some basic notions about hypergraphs in Section~\ref{hyper}, present our model of disease dynamics on hypergraphs in Section~\ref{sec:infection}, and discuss the random-hypergraph models that we employ in Section~\ref{sec:hypergraph}.

\subsection{Hypergraphs} \label{hyper}

Consider an unweighted hypergraph $(V,E)$. The set $V$ of nodes represent individuals, and suppose that the hyperedge set $E = E_d\cup E_e$ 
has hyperedges of two disjoint categories, which correspond to two distinct transmission modes in our model of disease dynamics.
The set $E_d$ consists of hyperedges of size 2. These hyperedges, which encode pairwise (i.e., dyadic) interactions, represent direct contacts between two individuals. We associate these hyperedges with disease transmission via large droplets. The set $E_e$ consists of hyperedges of size at least 3. These hyperedges, which encode polyadic interactions, represent indoor physical environments (such as households, workplaces, and school classrooms) that individuals occupy. We associate these hyperedges with disease transmission via infected airborne aerosols.

Let $N = |V|$ denote the number of nodes of a hypergraph, and let $M = |E|$ denote the hypergraph's number of hyperedges. Henceforth, we refer to the hyperedges in the subset $E_d$ as ``edges" and to hyperedges in the subset $E_e$ as
``hyperedges". With this terminology in mind, we let $M_d = |E_d|$ denote the number of edges and let $M_e = |E_e|$ denote the number of hyperedges. 

For convenience, we describe a hypergraph using an incidence matrix $\I$, which is an $N\times M$ matrix such that $\I_{ih} = 1$ if node $i$ is in hyperedge $h$ and $\I_{ih} = 0$ otherwise \cite{newman2018networks}. Additionally, we use $\I^d$ to denote the incident matrix with respect to the edges and $\I^e$ to denote the incident matrix with respect to the hyperedges. The $N\times N$ matrix $\W^d = \I^d(\I^d)^T$ has entries $\W^d_{ij}$ that record the number of edges that include nodes $i$ and $j$, and the $N\times N$ matrix $\W^e = \I^e(\I^e)^T$ records the numbers of hyperedges that include each pair of nodes.


\subsection{Model of disease spread}\label{sec:infection}

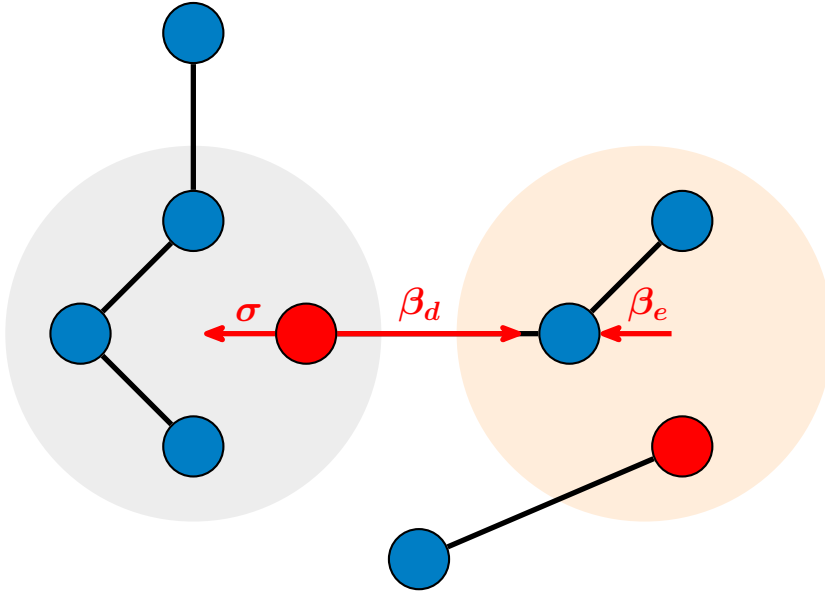
\begin{figure}
    \centering

\begin{tikzpicture}[thick, >={Stealth[round]}, node distance=1cm]


\fill[gray!20, opacity=0.7, on background layer] (-3,0) circle (2.5);      
\fill[orange!20, opacity=0.7, on background layer] (3,0) circle (2.5);     

\node (a0) at (-3,0) {}; 
\node[circle, draw, fill=cyan!60!blue, minimum size=8mm] (a1) at (-4.5,0) {};
\node[circle, draw, fill=cyan!60!blue, minimum size=8mm] (a2) at (-3,1.5) {};
\node[circle, draw, fill=cyan!60!blue, minimum size=8mm] (a3) at (-3,-1.5) {};
\node[circle, draw, fill=red, minimum size=8mm] (a4) at (-1.5,0) {}; 
\node[circle, draw, fill=cyan!60!blue, minimum size=8mm] (a5) at (-3,4) {};

\draw[line width=2pt] (a1) -- (a2);
\draw[line width=2pt] (a1) -- (a3);
\draw[line width=2pt] (a2) -- (a5);

\draw[->, red, line width=2pt] (a4) -- (a0) node[midway, above, text=red, fill=none, draw=none] {\Large \,\,\,$\bm{\sigma}$};

\node (b1') at (1.5,0) {};
\node[circle, draw, fill=cyan!60!blue, minimum size=8mm] (b1) at (2,0) {};
\node[circle, draw, fill=cyan!60!blue, minimum size=8mm] (b2) at (3.5,1.5) {};
\node[circle, draw, fill=cyan!60!blue, minimum size=8mm] (b3) at (0,-3) {};
\node[circle, draw, fill=red, minimum size=8mm] (b4) at (3.5,-1.5) {}; 
\node (b0) at (3.5,0) {}; 

\draw[line width=2pt] (b1) -- (b2);
\draw[line width=2pt] (b4) -- (b3);
\draw[line width=2pt] (a4) -- (b1);

\draw[->, red, line width=2pt] (b0) -- (b1) node[midway, above , text=red, fill=none, draw=none] {\Large \,\,\,\,$\bm{\beta_e}$};

\draw[->, red, line width=2pt] (a4) -- (b1') node[midway, above, text=red, fill=none, draw=none] {\Large \!\!\!$\bm{\beta_d}$};

\end{tikzpicture}
    \caption{An illustration of the infection mechanisms in our stochastic model of disease spread. Infected nodes (red) infect susceptible nodes (blue) at a rate $\beta_d$ via edges. Infected nodes infect uncontaminated hyperedges (gray) that include them at a rate $\sigma$. Contaminated hyperedges (light orange) infect susceptible nodes that are attached to them at a rate $\beta_e$.}
    \label{fig:infection}
\end{figure}

Our disease-spread model is a continuous-time, discrete-state Markov process with state vector $(\bm{X}(t),\bm{Y}(t))$, where $\bm X(t)$ encodes the states of the nodes and $\bm Y(t)$ encodes the states of the hyperedges in $E_e$ (i.e., the hyperedges of size at least 3). For each node $i\in V$, the state is $X_i(t) = 1$ if node $i$ is \textit{infected} at time $t$ and $X_i(t) = 0$ if it is \textit{susceptible} at time $t$. For each hyperedge $\ell \in E_e$, the state is $Y_\ell(t) = 1$ if hyperedge $\ell$ is \textit{contaminated} (i.e., it has a high concentration of infected aerosols) at time $t$ and $Y_\ell(t) = 0$ if it is \textit{uncontaminated} (i.e., it has a low concentration of infected aerosols).

We now describe the update rules for the states of the nodes and the states of the hyperedges in $E_e$. 
\begin{enumerate}
    \item A susceptible node $i\in V$ becomes infected at rate
    \begin{equation}\label{eq:node_infected}
  	  \beta_d \sum_{h\in E_d} \I_{ih}\bigg(\sum_{j\neq i} X_{j} \I_{jh}\bigg) + \beta_e \sum_{h\in E_e} \I_{ih} Y_h \,,
    \end{equation}
    where $\beta_d$ and $\beta_e$ are the positive constants that are associated with the two transmission modes.  
    \item An infected node $i\in V$ becomes susceptible at node-recovery rate $\gamma$. 
    \item An uncontaminated hyperedge $\ell\in E_e$ becomes contaminated at rate
    \begin{equation}\label{eq:h_contaminated}
	    \sigma  g\bigg(\sum_{j\in V} X_{j} \I_{j\ell}\bigg) \,,
    \end{equation}
    where $\sigma$ is a constant that encodes environmental factors (such as temperature, humidity, and ultraviolet radiation) that affect aerosol transmission~\cite{wang2021airborne} and the function $g$ specifies how the
    infected individuals in an environment contribute to the environment's virus-transmission risk.
    For convenience, we assume that $g$ is a sigmoid function with $g(0) = 0$, $g'(x) > 0$, and $g''(x) < 0$ for $x\geq 0$. 
    We use the function $g(x) = \arctan(x)$.

    \item A contaminated hyperedge $\ell\in E_e$ becomes uncontaminated at hyperedge-recovery rate $\delta$. This recovery rate is affected by environmental factors such as airflow direction, ventilation, and air filtration and disinfection~\cite{wang2021airborne}. 
\end{enumerate}

We summarize the model parameters and their definitions in Table \ref{table:parameter}, and we illustrate the infection mechanisms of our disease-spread model in Figure \ref{fig:infection}. For most of our paper (except for Section \ref{sec:eff_delta}), we assume that the constants $\beta_d$, $\beta_e$, and $\gamma$ are the same for all nodes and that the constants $\sigma$ and $\delta$ are the same for all hyperedges in $E_e$. In Section \ref{sec:eff_delta}, we briefly use numerical simulations to examine the effects of nonuniformity in the hyperedge-recovery rate $\delta$. Future work can explore the impact of the choice of the function $g$ and of heterogeneities in the model parameters.

\begin{table}[h]
\begin{tabular}{|l|l|}
\hline
Parameter               & Definition                                                \\ \hline
$\beta_d$ & Infection rate of individuals on other individuals        \\ \hline
$\beta_e$ & Infection rate of environments on individuals             \\ \hline
$\sigma$   & Contamination rate of individuals on environments             \\ \hline
$\gamma$   & Recovery rate of individuals                              \\ \hline
$\delta$   & Recovery rate of environments                             \\ \hline
$g$                       & Function that describes the contribution of infected individuals to \\ & the contamination of an environment \\ \hline
\end{tabular}
\caption{The parameters in our disease-spread model on hypergraphs.}\label{table:parameter}
\end{table}


\subsection{Random-hypergraph models}\label{sec:hypergraph}

{In this subsection, we describe the three random-hypergraph models that we employ: complete $(2,s)$-uniform hypergraphs, {$(k_d,k_e)$}-regular $(2,s)$-uniform hypergraphs, and Erd\H{o}s--R\'enyi (ER) hypergraphs. In Section \ref{sec:meanfield}, we derive mean-field approximations of our stochastic disease-spread model \eqref{eq:node_infected}--\eqref{eq:h_contaminated} for complete $(2,s)$-uniform hypergraphs and regular $(2,s)$-uniform hypergraphs. 
In Section \ref{sec:simulation}, we perform numerical simulations of both our stochastic model and a mean-field approximation of it} on regular $(2,s)$-uniform hypergraphs and ER hypergraphs.


\subsubsection{Complete $(2,s)$-uniform hypergraphs}

In a complete $(2,s)$-uniform hypergraph, all hyperedges in $E_e$ have size $s$, all pairs of nodes have edges between them, and
every subset of $s$ nodes has a hyperedge between them.  


\subsubsection{{$(k_d,k_e)$}-regular $(2,s)$-uniform hypergraphs}

In a {$(k_d,k_e)$}-regular $(2,s)$-uniform hypergraph, all hyperedges in $E_e$ have size $s$ and each node is attached to $k_d$ edges and to $k_e$ hyperedges. For brevity, for the rest of our paper, we refer to these hypergraphs as 
``regular $(2,s)$-uniform hypergraphs".

We now describe how we generate regular $(2,s)$-uniform hypergraphs with $N = |V|$ nodes. We attach each node $i\in V$ to
 $k_d$ ``stubs" (i.e., loose ends)
 for the edges and to $k_e$ stubs for the hyperedges. We form an edge by selecting 2 stubs uniformly at random, and we form a hyperedge by selecting $s$ stubs uniformly at random.
 We repeat this process until we have associated all stubs with nodes. We require that both $Nk_d/2$ and $Nk_e/s$ are integers. It is possible for a regular $(2,s)$-uniform hypergraph to have
 self-edges, self-hyperedges, multi-edges, and/or multi-hyperedges. We choose to keep any such edges and hyperedges.\footnote{\label{note1}In practice, none of our examined networks have any self-edges, self-hyperedges, multi-edges, or multi-hyperedges.} See \cite{fosdick2018configuring} for a detailed discussion of the consequences of such a choice in the context of ordinary graphs.
 
This type of hypergraph is a special case of Chodrow's stub-labeled configuration model of hypergraphs \cite{chodrow2020configuration}. In Chodrow's hypergraph configuration model, one specifies the number of nodes, the number of hyperedges, a hyperdegree sequence $\bm k$, and a hyperedge-size sequence $\bm s$. 
{In Section \ref{sec:simulation}, we simulate our stochastic disease-spread model \eqref{eq:node_infected}--\eqref{eq:h_contaminated} on regular $(2,s)$-uniform hypergraphs.} We also simulate our stochastic disease-spread model \eqref{eq:node_infected}--\eqref{eq:h_contaminated} on configuration-model hypergraphs with power-law hyperdegree sequences but do not observe any significant qualitative impact of 
the hyperdegee-sequence choice
 on the disease dynamics, so we omit these simulations from the present paper. 
Future work can further examine our disease-spread model on
configuration-model hypergraphs and their generalizations.


\subsubsection{ER hypergraphs}

We also consider a random-hypergraph model that is a natural extension of $G(N,M)$ ER graphs. 
We specify both the number of hyperedges and the hyperedge sizes. 
We denote a $G(N,M)$ ER hypergraph by $G(N,M_d,\bm M_e, \bm s)$, where the vectors $\bm M_e$ and $\bm s$ specify the number of hyperedges of each size. 
There are $M_{e,i}$ hyperedges of size $s_i$ for $i \in \{1,\ldots,|\bm s| \}$. For each hyperedge of size $s_i$, we uniformly-randomly select $s_i$ nodes from the set of $N$ nodes.
The hypergraphs that we construct in this manner can have self-edges, self-hyperedges, multi-edges, and/or multi-hyperedges; we choose to keep them.\footnote{For this model as well, none of our networks have any self-edges, self-hyperedges, multi-edges, or multi-hyperedges in practice.} 


\section{Mean-field approximations}\label{sec:meanfield}

In this section, we derive mean-field approximations of the stochastic disease-spread model \eqref{eq:node_infected}--\eqref{eq:h_contaminated} for both complete $(2,s)$-uniform hypergraphs (see Section \ref{sec:meanfield_complete}) and regular 
$(2,s)$-uniform hypergraphs (see Section \ref{sec:meanfield_regular}). We also briefly discuss some practical implications of the latter, as we use it to guide many simulations in Section \ref{sec:simulation}.


\subsection{Mean-field approximation for complete $(2,s)$-uniform hypergraphs}\label{sec:meanfield_complete}

We develop a mean-field approximation for complete $(2,s)$-uniform hypergraphs and establish an extinction threshold (i.e., a condition for the local stability of the disease-free equilibrium) of the disease in the mean-field model. 
We use an individual-level approach~\cite{pastor2015epidemic}.

Consider the expected value of $X_i(t)$ for nodes $i\in V$ and $Y_\ell(t)$ for hyperedges $\ell\in E_e$:
\begin{align*}
    x_i(t) &:= \E(X_i(t)) = \PP(X_i(t) = 1) \,,\\
    y_\ell(t) &:= \E(Y_\ell(t)) = \PP(Y_\ell(t) = 1)\,.
\end{align*}
For a node $i\in V$, we have
\begin{align*}
    \dot{x}_i &= \E\bigg[ \beta_d \sum_{h\in E_d} \I_{ih} \big(\sum_{j\neq i} X_{j} \I_{jh}\big)+ \beta_e \sum_{h\in E_e} \I_{ih} Y_h\bigg]\PP(X_i(t)=0) -\gamma \PP(X_i(t)=1)\\
    	&= \bigg(\beta_d \sum_{h\in E_d} \I_{ih} \E\big[\sum_{j\neq i} X_{j} \I_{jh}\big]+ \beta_e \sum_{h\in E_e} \I_{ih} \E [Y_h]\bigg)(1-x_i) - \gamma x_i\\
   	 &= \bigg(\beta_d \sum_{h\in E_d} \I_{ih} \sum_{j\neq i} x_j \I_{jh} + \beta_e \sum_{h\in E_e} \I_{ih} y_h\bigg)(1-x_i) - \gamma x_i\,.
\end{align*}
Similarly, for a hyperedge $\ell\in E_e$, we have
\begin{align*}
    \dot{y}_\ell = \sigma \E\bigg[g\big(\sum_{j\in V} X_i\I_{i\ell}\big)\bigg](1 - y_\ell) - \delta y_\ell\,.
\end{align*}
We introduce a further approximation by interchanging the operation of the function
$g$ and taking the expectation to obtain a deterministic mean-field ordinary-differential-equation (ODE) system
\begin{align}\label{eqn:xiyl}
    \dot{x}_i &= F_i(\bm x,\bm y) := \bigg(\beta_d \sum_{h\in E_d}\I_{ih} \sum_{j\neq i} x_j \I_{jh} + \beta_e \sum_{h\in E_e}\I_{ih} y_h\bigg)(1 - x_i) - \gamma x_i \,, \nonumber\\
    \dot{y}_\ell &= G_\ell(\bm x,\bm y) := \sigma g\big(\sum_{j\in V} x_j\I_{j\ell}\big)(1 - y_\ell) - \delta y_\ell \,.
\end{align}

We now provide a sufficient condition for the disease-free equilibrium (DFE) $\bm 0 \in \mathbb{R}^{N + M_e}$ of \eqref{eqn:xiyl} to be locally asymptotically stable. 
\begin{theorem}
The DFE of \eqref{eqn:xiyl} is locally asymptotically stable if 
    $R^{\mathrm{c}}_0 < 1$ and is unstable if $R^{\mathrm{c}}_0 > 1$, where 
    \begin{equation}\label{eq:R0c}
		R^{\mathrm{c}}_0 = \frac{\beta_d(N  - 1)}{\gamma} + \frac{\beta_e \sigma g'(0){N - 1 \choose s - 1}s}{\gamma\delta}\,.  
\end{equation}
    
\end{theorem}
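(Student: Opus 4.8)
The plan is to linearize the mean-field system \eqref{eqn:xiyl} at the DFE $\bm 0$, compute the Jacobian, and read off a stability threshold from its spectrum. Since we are dealing with a complete $(2,s)$-uniform hypergraph, the problem has a lot of symmetry, and I expect the Jacobian to decompose nicely so that the relevant eigenvalue (the one governing instability) corresponds to the ``all nodes equal, all hyperedges equal'' mode. First I would compute partial derivatives of $F_i$ and $G_\ell$ at $\bm 0$. Because $g(0)=0$ and the factor $(1-x_i)$ evaluates to $1$ at $\bm 0$, the linearization of $\dot x_i$ keeps only the linear terms $\beta_d \sum_{h\in E_d}\I_{ih}\sum_{j\neq i}x_j\I_{jh} + \beta_e\sum_{h\in E_e}\I_{ih}y_h - \gamma x_i$, and the linearization of $\dot y_\ell$ is $\sigma g'(0)\sum_{j} x_j\I_{j\ell} - \delta y_\ell$ (the $(1-y_\ell)$ factor is again $1$ and we Taylor-expand $g$ to first order). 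So the Jacobian has the block form
\[
J = \begin{pmatrix} \beta_d A - \gamma I_N & \beta_e B \\ \sigma g'(0) B^T & -\delta I_{M_e} \end{pmatrix}\,,
\]
where $A$ is the (multi-)adjacency matrix coming from the edges and $B$ is the node--hyperedge incidence matrix $\I^e$; on a complete $(2,s)$-uniform hypergraph $A = J_N - I_N$ (all-ones minus identity) and $BB^T = \W^e$ has a similarly explicit form.

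Next I would diagonalize. Write $\mathbf{1}$ for the all-ones vector. By symmetry, $A$ has eigenvalue $N-1$ on $\mathbf{1}$ and $-1$ on $\mathbf{1}^\perp$, and $\W^e = BB^T$ has eigenvalue $\binom{N-1}{s-1}s$ on $\mathbf{1}$ (each node lies in $\binom{N-1}{s-1}$ hyperedges, and summing the row of $\W^e$ counts ordered pairs within hyperedges containing a fixed node) and a smaller eigenvalue on $\mathbf{1}^\perp$ inside the column space of $B$, with $0$ on the rest. The key point is that $J$ leaves invariant the two-dimensional ``symmetric'' subspace spanned by $(\mathbf{1}_N, \bm 0)$ and $(\bm 0, \mathbf{1}_{M_e})$ — more precisely, vectors of the form $(\alpha\mathbf{1}_N, \beta\mathbf{1}_{M_e})$ — and I would argue that the largest real part among all eigenvalues of $J$ is achieved on this subspace (the complementary modes contribute strictly more negative contributions because $A$ contributes $-1$ rather than $N-1$ and the cross-coupling is weaker on $\mathbf{1}^\perp$). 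Restricted to the symmetric subspace, $J$ acts as the $2\times 2$ matrix
\[
\begin{pmatrix} \beta_d(N-1) - \gamma & \beta_e \binom{N-1}{s-1} \\ \sigma g'(0)\, s & -\delta \end{pmatrix}\,,
\]
where the off-diagonal entries come from $B\mathbf{1}_{M_e}$ (each node is in $\binom{N-1}{s-1}$ hyperedges) and $B^T\mathbf{1}_N$ (each hyperedge has $s$ nodes). This $2\times 2$ matrix has negative trace iff $\beta_d(N-1) < \gamma + \delta$ and positive determinant iff $\gamma\delta - \beta_d(N-1)\delta - \beta_e\sigma g'(0)\binom{N-1}{s-1}s > 0$, i.e. iff $R_0^{\mathrm c} < 1$; when $R_0^{\mathrm c} > 1$ the determinant is negative so there is a positive real eigenvalue and the DFE is unstable. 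Dividing the determinant condition through by $\gamma\delta$ gives exactly \eqref{eq:R0c}.

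The main obstacle — and the step I would spend the most care on — is the reduction claim: showing that the Jacobian's stability is governed entirely by this $2\times 2$ symmetric block, i.e. that all eigenvalues associated with modes orthogonal to the symmetric subspace have negative real part whenever the symmetric block is stable. The cleanest route is probably a change of basis: decompose $\mathbb{R}^N = \langle\mathbf 1\rangle \oplus \mathbf{1}^\perp$ and $\mathbb{R}^{M_e} = \langle\mathbf 1\rangle \oplus (\text{col}(B^T)\cap\mathbf 1^\perp) \oplus \ker B$, note that $J$ respects a compatible splitting because $A$, $B^T B$, and $BB^T$ are all simultaneously block-diagonalized by the symmetry group acting transitively on nodes, and then check that on the non-symmetric blocks $J$ is either block upper-triangular with diagonal blocks $-\gamma I$, $-\delta I$ (on $\ker B$) or a $2\times 2$-type block with entries $(-\beta_d - \gamma, \ \beta_e \mu; \ \sigma g'(0)\nu, \ -\delta)$ where $\mu\nu$ is the relevant smaller eigenvalue of $\W^e$, whose trace is clearly negative and whose determinant $\gamma\delta + \beta_d\delta - \beta_e\sigma g'(0)\mu\nu$ is positive whenever the symmetric-block determinant is (since $\mu\nu < \binom{N-1}{s-1}s$ and $\beta_d\delta > 0$). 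Once that domination is in place, the theorem follows from the standard principle that a hyperbolic equilibrium is locally asymptotically stable iff the Jacobian is Hurwitz. If making the group-representation argument fully rigorous proves cumbersome, an acceptable fallback is to exhibit the explicit eigenvectors of $J$ directly using the complete-hypergraph structure, since $A$ and $\W^e$ are both polynomials in $J_N$ there.
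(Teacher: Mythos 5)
Your proposal is correct and arrives at exactly the same place as the paper's proof, via an equivalent but differently organized computation. The paper evaluates the characteristic polynomial of the block Jacobian with a Schur complement and then applies the matrix determinant lemma to $f_1(\lambda)I + f_2(\lambda)\bm 1\bm 1^T$, factoring it into the two quadratics $\lambda^2+a_1\lambda+a_2$ and $\lambda^2+b_1\lambda+b_2$; you instead decompose $\mathbb{R}^N\oplus\mathbb{R}^{M_e}$ into the symmetric subspace $\{(\alpha\bm 1_N,\beta\bm 1_{M_e})\}$, its orthogonal complement inside $\mathrm{col}(\I^e)$, and $\ker \I^e$. The two routes produce literally the same quadratics: your symmetric $2\times 2$ block has characteristic polynomial $\lambda^2+b_1\lambda+b_2$ with $b_2=\gamma\delta(1-R_0^{\mathrm c})$, and your non-symmetric block (with $\mu\nu = \binom{N-2}{s-2}\frac{N-s}{s-1}$, the smaller eigenvalue of $\W^e$) reproduces $\lambda^2+a_1\lambda+a_2$; your sign checks under $R_0^{\mathrm c}<1$ and $R_0^{\mathrm c}>1$ match the paper's Cases 1 and 2. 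Two small remarks. First, your interim assertion that the largest real part is ``achieved on the symmetric subspace'' is neither proved nor needed — what actually closes the argument is your direct verification that every block is Hurwitz when $R_0^{\mathrm c}<1$, so you should drop the domination claim rather than try to justify it. Second, to make the invariant-subspace bookkeeping airtight you only need the elementary identities $\I^e\bm 1_{M_e}=\binom{N-1}{s-1}\bm 1_N$, $(\I^e)^T\bm 1_N=s\bm 1_{M_e}$, and the fact that $\widetilde\W^d$ and $\W^e$ are linear combinations of $I$ and $\bm 1\bm 1^T$ on a complete $(2,s)$-uniform hypergraph; no group-representation machinery is required, which is essentially your own suggested fallback and is also how the paper proceeds.
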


\begin{proof}
    Because $g(0) = 0$, the point $\bm 0 \in \mathbb{R}^{N + M_e}$ is an equilibrium of the system \eqref{eqn:xiyl}. It suffices to show that every eigenvalue of the Jacobian matrix $J(\bm 0)$ has a negative real part. 
    
    We compute 
 \begin{align*}
    \frac{\partial F_i}{\partial x_j} &= \begin{cases}
    \beta_d \big(\sum_{h\in E_d}\I_{ih} \I_{jh}\big)(1 - x_i) \quad &\text{if}\quad j\neq i\\
    - \big(\beta_d\sum_{h\in E_d}\I_{ih} \sum_{j\in V}x_j  \I_{jh} + \beta_e \sum_{h\in E_e} \I_{ih} y_h\big) - \gamma \quad &\text{if} \quad j=i\,,
    \end{cases}\\
    \frac{\partial F_i}{\partial y_h} &= \beta_e \I_{ih}(1-x_i)\,,\\
    \frac{\partial G_\ell}{\partial x_j} &= \sigma g'\big(\sum_{j\in V} x_j\I_{j\ell}\big)\I_{j\ell}(1 - y_\ell)\,,\\
    \frac{\partial G_\ell}{\partial y_h} &= \begin{cases}
        0\quad &\text{if}\quad m\neq \ell\\
        -\delta \quad&\text{if}\quad m = \ell \,.
    \end{cases}
    \end{align*}
    The Jacobian evaluated at the DFE
    is 
    \[
   	 J(\bm 0) = \begin{bmatrix}
   	     \beta_d \widetilde{\W}^{d} - \gamma I_{N\times N} &\beta_e \I^e\\
   	     g'(0)\sigma (\I^e)^T &-\delta I_{M_e\times M_e} 
   	 \end{bmatrix}\,,
    \]
    where $I_{N\times N}$ and $I_{M_e\times M_e}$ are identity matrices of sizes $N$ and $ M_e$, respectively, and $\widetilde{\W}^{d}$ is the matrix that we obtain from $\W^d$ after replacing its diagonal entries by $0$.

    The associated characteristic polynomial is
    \begin{align*}
  	  p(\lambda)&= \det\begin{bmatrix}
  		      A  &B\\
        C &D 
  		  \end{bmatrix}:=\det\begin{bmatrix}
        	\beta_d\widetilde{\W}^{d} - (\gamma +\lambda) I_{N\times N}  &\beta_e \I_e\\
        	\sigma g'(0) (\I^e)^T &-(\delta +\lambda) I_{M_e\times M_e} 
  	  \end{bmatrix}  = \det(D)\det(A-BD^{-1}C)\,.
    \end{align*}
   Note that $\det(D) = (-1)^{M_e} (\lambda+\delta)^{M_e}$ and $D^{-1} = \frac{-1}{\delta+\lambda}I_{M_e\times M_e}$. Therefore,
    \[
	    p(\lambda) = (-1)^{M_e} (\lambda+\delta)^{M_e} q(\lambda)\,, 
    \]
    where 
    \[
  	  q(\lambda)= \det(\beta_d\widetilde{\W}^{d} - (\gamma +\lambda) I_{N\times N} +\frac{\beta_e \sigma g'(0)}{\delta+\lambda}\W^e)\,.
    \]

       It now suffices to show that every solution of $q(\lambda) = 0$ has a negative real part. The hypergraph on which we {approximate the model \eqref{eq:node_infected}--\eqref{eq:h_contaminated}} 
    is a complete $(2,s)$-uniform hypergraph, so we directly compute 
    \[
	    \widetilde{\W}^{d} = \bm 1\bm 1^T - I_{N\times N}
    \]
    and 
    \[
  	  \W^e = {N - 2\choose s - 2}\bm 1\bm 1^T + \bigg[{N - 1\choose s - 1} - {N - 2\choose s - 2} \bigg]I_{N\times N} = {N - 2\choose s - 2}\bm 1\bm 1^T + {N - 2\choose s - 2}\frac{N - s}{s - 1}I_{N\times N}\,.
    \]
    We thereby obtain
    \[
	    q(\lambda) = \det(I_{N\times N} f_1(\lambda) + \bm 1\bm 1^T f_2(\lambda))\,,
    \]
    where 
  \begin{align*}
	    f_1(\lambda) &= {N - 2\choose s - 2}\frac{N - s}{s - 1}\frac{\beta_e \sigma g'(0)}{\delta + \lambda} - \beta_d - (\gamma + \lambda) \,, \\
	    f_2(\lambda) &= \beta_d + {N - 2\choose s - 2}\frac{\beta_e \sigma g'(0)}{\delta + \lambda}\,.
    \end{align*}
    Using the matrix determinant lemma~\cite{harville1998matrix}, we write
    \begin{align*}
	     q(\lambda) &= \det\bigg(f_1(\lambda)(I_{N\times N} + \frac{f_2(\lambda)}{f_1(\lambda)}\bm 1\bm 1^T \bigg) \\
	     &= f_1(\lambda)^N\bigg(1 + \frac{f_2(\lambda)}{f_1\lambda)} \bm 1^T\bm 1\bigg) \\
	     &= f_1(\lambda)^N \bigg(1 + \frac{Nf_2(\lambda)}{f_1(\lambda)}\bigg)\,.
    \end{align*}
Therefore, the solutions of $q(\lambda) = 0$ satisfy either $f_1(\lambda) = 0$ or $f_1(\lambda) + Nf_2(\lambda) = 0$. 

Direct calculations show that the solutions of $f_1(\lambda) = 0$ are the solutions of the quadratic equation $\lambda^2 + a_1\lambda + a_2 = 0$, with
\begin{align*}
	a_1 &= \gamma + \delta + \beta_d \,, \\
	a_2 &= \gamma\delta + \beta_d\delta - \beta_e \sigma g'(0){N - 2\choose s - 2}\frac{N - s}{s - 1}\,.
\end{align*}
Similarly, direct calculations show that the solutions of $f_1(\lambda) + Nf_2(\lambda) = 0$ are solutions of the quadratic equation $\lambda^2 + b_1\lambda + b_2 = 0$, with
\begin{align*}
	b_1 &= \gamma + \delta - \beta_d(N - 1) \,, \\
	 b_2 &= \gamma\delta - \beta_d(N - 1)\delta - \beta_e \sigma g'(0){N - 1\choose s - 1}s = \gamma\delta(1 - R_0)\,.
\end{align*}
We consider two cases.

\medskip

\textbf{Case 1:} Suppose that $R^{\mathrm{c}}_0 < 1$.
We  have
\[
	\beta_e \sigma g'(0){N - 2\choose s - 2}\frac{N - s}{s - 1} < \beta_e\sigma g'(0){N - 1\choose s - 1}s < \gamma\delta\,,
\]
where the second inequality follows from the condition $R^{\mathrm{c}}_0 < 1$.Therefore, both $a_1$ and $a_2$ are positive, which implies that the solutions of $f_1(\lambda) = 0$ all have negative real parts.

The condition $R^{\mathrm{c}}_0 < 1$ implies that $b_2 > 0$ and $\beta_d(N - 1) < \gamma$. The latter implies that $b_1 > 0$. Consequently, all solutions of $f_1(\lambda) + Nf_2(\lambda) = 0$ must have negative real parts. 
Therefore, the DFE of \eqref{eqn:xiyl} is locally asymptotically stable.

\medskip

\textbf{Case 2:} Suppose that $R^{\mathrm{c}}_0 > 1$. This condition implies directly that $b_2 < 0$. Therefore, the quadratic equation $\lambda^2 + b_1\lambda + b_2 = 0$ must have a positive solution. Consequently, the DFE of \eqref{eqn:xiyl} is unstable.
\end{proof}

We end this subsection with a brief remark about the expression for $R_0^{\mathrm{c}}$ in \eqref{eq:R0c}. 

\begin{remark}\label{rem:meanfield_complete}
The expression \eqref{eq:R0c} is a sum of contributions of the two disease-transmission modes. The only parameter that affects both terms is the 
node-recovery rate (i.e., individual recovery rate) $\gamma$. When the contribution of one mode exceeds $1$,
it is not possible to adjust the other mode's parameters (except for $\gamma$) to reduce $R_0^{\mathrm{c}}$ below $1$. 
\end{remark}


\subsection{Mean-field approximation for regular $(2,s)$-uniform hypergraphs}\label{sec:meanfield_regular}

In this subsection, we give a  mean-field approximation for regular $(2,s)$-uniform hypergraphs. In such a hypergraph, recall that all hyperedges are of size $s$ and that each node is attached to $k_d$ edges and $k_e$ hyperedges.

Using a degree-based mean-field approach~\cite{pastor2015epidemic}, we assume that all nodes behave in the same way and that all the hyperedges in $E_e$ behave in the same way. 
In particular, we approximate the dynamics of the stochastic model \eqref{eq:node_infected}--\eqref{eq:h_contaminated} by the deterministic ODE system
\begin{align}\label{eqn:meanfield}
    \dot{x} &= (\beta_dk_dx + \beta_ek_ey)(1 - x) - \gamma x\nonumber\,,\\
    \dot{y} &= \sigma g(sx)(1 - y) - \delta y\,,
\end{align}
where $x(t)$ is the proportion of infected nodes at time $t$ and $y(t)$ is the proportion of contaminated hyperedges at time $t$.

We will show that the basic reproduction number for the mean-field model \eqref{eqn:meanfield} has a similar form as the basic reproduction number $R^{\mathrm{c}}_0$
in Section \ref{sec:meanfield_complete}.

Specifically, we show that the basic reproduction number for the model \eqref{eqn:meanfield} is
\begin{equation}\label{eqn:R0}
    R^{\mathrm{ru}}_0 = \frac{\beta_dk_d}{\gamma} + \frac{\beta_e\sigma g'(0)k_e s}{\gamma\delta}\,.
\end{equation}

Before proving that \eqref{eqn:R0} is indeed the basic reproduction number for the model \eqref{eqn:meanfield}, we discuss several potential implications of the expression for $R^{\mathrm{ru}}_0$.
First, as with \eqref{eq:R0c}, the expression
\eqref{eqn:R0} is a sum of contributions of the two disease-transmission modes.

The only parameter that affects both terms is the 
node-recovery rate (i.e., individual recovery rate) $\gamma$. If individuals 
recover quickly, then they are also less likely to contaminate environments,
which in turn are less likely to infect their occupants. Another implication is that if the contribution of the dyadic transmission mode (i.e., infection through social contacts) satisfies $\frac{\beta_dk_d}{\gamma} > 1$, then it is not possible to control the environmental factors  (i.e., the hyperedge parameters $\beta_e$, $\sigma$, $k_e$, $s$, and $\delta$) to reduce the basic reproduction number $R^{\mathrm{ru}}_0$ below $1$. 
Finally, if $\frac{\beta_dk_d}{\gamma} < 1$, then in addition to control measures that reduce infection 
rates (through the parameters $\beta_d$, $\beta_e$, and $\sigma$) or social contacts (through the parameters $k_d$, $k_e$, and $s$), it is also true that increasing the hyperedge-recovery rate (i.e., environmental recovery rate) $\delta$ can help reduce $R^{\mathrm{ru}}_0$. In a real-life context, one can do this by improving {the ventilation and air filtration of indoor buildings~\cite{wang2021airborne, morawska2021paradigm}.

\begin{theorem}\label{thm:extinction}
When $R^{\mathrm{ru}}_0 < 1$, the DFE
of the mean-field model \eqref{eqn:meanfield} is locally asymptotically stable.
\end{theorem}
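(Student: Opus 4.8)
The plan is to linearize the two-dimensional system \eqref{eqn:meanfield} about the disease-free equilibrium $(x,y)=(0,0)$ and invoke the Routh--Hurwitz criterion for $2\times 2$ matrices: both eigenvalues of a real $2\times 2$ matrix have negative real part if and only if its trace is negative and its determinant is positive. Since $g(0)=0$, the origin is indeed an equilibrium of \eqref{eqn:meanfield}, so this is the right object to linearize about.

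First I would compute the Jacobian of the right-hand side of \eqref{eqn:meanfield}. Differentiating $F(x,y) = (\beta_d k_d x + \beta_e k_e y)(1-x) - \gamma x$ and $G(x,y) = \sigma g(sx)(1-y) - \delta y$ and evaluating at the origin (using $g(0)=0$) gives
\[
J(\bm 0) = \begin{bmatrix} \beta_d k_d - \gamma & \beta_e k_e \\[2pt] \sigma g'(0) s & -\delta \end{bmatrix}\,.
\]
From this I would read off $\mathrm{tr}\, J(\bm 0) = \beta_d k_d - \gamma - \delta$ and $\det J(\bm 0) = \gamma\delta - \beta_d k_d \delta - \beta_e \sigma g'(0) k_e s$.

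Next I would translate the hypothesis $R^{\mathrm{ru}}_0 < 1$ into these two sign conditions. Multiplying $R^{\mathrm{ru}}_0 < 1$ through by $\gamma\delta > 0$ yields $\beta_d k_d \delta + \beta_e \sigma g'(0) k_e s < \gamma\delta$, which is exactly the statement $\det J(\bm 0) > 0$. For the trace, I would use that the second summand of $R^{\mathrm{ru}}_0$ is nonnegative (all of $\beta_e$, $\sigma$, $g'(0)$, $k_e$, $s$ are positive), so $R^{\mathrm{ru}}_0 < 1$ forces $\beta_d k_d/\gamma < 1$, i.e.\ $\beta_d k_d - \gamma < 0$; adding $-\delta < 0$ gives $\mathrm{tr}\, J(\bm 0) < 0$. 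Both Routh--Hurwitz conditions then hold, so the DFE is locally asymptotically stable.

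There is no serious obstacle here: the computation is short, and the only thing to be careful about is that the determinant condition genuinely uses the full inequality $R^{\mathrm{ru}}_0 < 1$, whereas the trace condition only needs the weaker fact that the dyadic contribution $\beta_d k_d/\gamma$ is below $1$. One could instead derive \eqref{eqn:R0} through the next-generation-matrix formalism to formally justify the name ``basic reproduction number,'' but for the local-stability claim stated in the theorem the direct $2\times 2$ argument above is self-contained and sufficient.
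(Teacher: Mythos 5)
Your proposal is correct and follows essentially the same route as the paper: linearize \eqref{eqn:meanfield} at the origin, compute the same $2\times 2$ Jacobian, and verify $\det J(\bm 0) = \gamma\delta(1-R^{\mathrm{ru}}_0) > 0$ and $\mathrm{tr}\,J(\bm 0) < 0$ via the observation that $\beta_d k_d/\gamma < R^{\mathrm{ru}}_0 < 1$. No gaps.
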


\begin{proof}
Because $g(0) = 0$, the DFE $\bm 0\in \mathbb{R}^{N + M_e}$ is an equilibrium of the system \eqref{eqn:meanfield}. Linearizing around this equilibrium yields the Jacobian matrix
\[
	J(\bm 0)=\begin{bmatrix}
	    \beta_d k_d -\gamma & \beta_ek_e\\
	    \sigma s g'(0) & -\delta\
	\end{bmatrix}\,.
\]
We calculate the determinant of $J(\bm 0)$ to obtain $\det(J(\bm 0)) = \gamma\delta - \beta_dk_d\delta - \beta_e\sigma g'(0)k_e s = \gamma\delta(1 - R^{\mathrm{ru}}_0) > 0$. The trace of $J(\bm 0)$ is $\text{Tr}(J(\bm 0)) = \beta_dk_d - \gamma - \delta < \beta_dk_d - \gamma < 0$ because $\frac{\beta_dk_d}{\gamma} < R^{\mathrm{ru}}_0 < 1$. Therefore, the eigenvalues of $J(\bm 0)$ must have negative real parts, which implies that the DFE is locally asymptotically stable. 
\end{proof}

\begin{lemma}\label{lem:existence}
When $R^{\mathrm{ru}}_0 > 1$, the mean-field model \eqref{eqn:meanfield} has a unique positive equilibrium.
\end{lemma}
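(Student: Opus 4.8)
The plan is to reduce the existence-and-uniqueness question for a positive equilibrium of \eqref{eqn:meanfield} to a single scalar fixed-point equation and then exploit the concavity of $g$. Setting $\dot x = \dot y = 0$, the second equation gives $y = \dfrac{\sigma g(sx)}{\delta + \sigma g(sx)}$, which is well-defined for $x \in [0,1]$ because $g(sx) \geq 0$. Substituting this into the first equilibrium equation $(\beta_d k_d x + \beta_e k_e y)(1-x) = \gamma x$ and dividing by $x$ (which is legitimate for a \emph{positive} equilibrium) yields
\[
  \beta_d k_d (1 - x) + \beta_e k_e \,\frac{\sigma g(sx)}{\delta + \sigma g(sx)}\cdot\frac{1-x}{x} = \gamma\,.
\]
I would define $H(x)$ to be the left-hand side on $(0,1]$ and show that $H$ is strictly decreasing, that $\lim_{x \to 0^+} H(x) > \gamma$ precisely when $R^{\mathrm{ru}}_0 > 1$, and that $H(1) = 0 < \gamma$; the intermediate value theorem together with strict monotonicity then gives exactly one root $x^* \in (0,1)$, and the corresponding $y^* = \sigma g(sx^*)/(\delta + \sigma g(sx^*)) \in (0,1)$, establishing existence and uniqueness of the positive equilibrium.

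The key steps, in order: (i) derive the reduction to $H(x) = \gamma$ as above; (ii) compute the boundary behavior — $H(1) = 0$ is immediate, and for the limit as $x \to 0^+$ use $g(0) = 0$ with $g$ differentiable so that $g(sx)/x \to s g'(0)$, giving $\lim_{x\to 0^+} H(x) = \beta_d k_d + \dfrac{\beta_e k_e \sigma s g'(0)}{\delta} = \gamma R^{\mathrm{ru}}_0$, which exceeds $\gamma$ exactly when $R^{\mathrm{ru}}_0 > 1$; (iii) prove $H$ is strictly decreasing on $(0,1)$. For (iii) the term $\beta_d k_d(1-x)$ is obviously strictly decreasing, so it suffices to show $x \mapsto \dfrac{\sigma g(sx)}{\delta + \sigma g(sx)}\cdot\dfrac{1-x}{x}$ is nonincreasing (in fact strictly decreasing). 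Write it as the product of $\phi(x) := \dfrac{\sigma g(sx)}{\delta + \sigma g(sx)}$ and $\psi(x) := \dfrac{1-x}{x} = \dfrac1x - 1$. The factor $\psi$ is strictly positive and strictly decreasing on $(0,1)$. The factor $\phi$ is strictly increasing (since $g$ is increasing and $t \mapsto t/(\delta+t)$ is increasing), so a plain product rule does not settle the sign; here is where concavity enters.

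The main obstacle is exactly controlling the sign of the derivative of $\phi(x)\psi(x)$, since $\phi$ increases while $\psi$ decreases. I would handle it by showing $x \mapsto \phi(x)/x$ is nonincreasing on $(0,1)$ and then writing $\phi(x)\psi(x) = \dfrac{\phi(x)}{x} - \phi(x)$, a difference of a nonincreasing function and a strictly increasing function, hence strictly decreasing. To see $\phi(x)/x$ is nonincreasing it is enough that $g(sx)/x$ is nonincreasing, i.e.\ that $g(t)/t$ is nonincreasing in $t \geq 0$ (then compose with the increasing concave map $t \mapsto \sigma t/(\delta+\sigma t)$ and note $u \mapsto \text{(increasing concave, vanishing at }0)/x$ inherits the property; more directly, $\phi$ itself is concave with $\phi(0)=0$, being the composition of the concave increasing $g$ with the concave increasing $t\mapsto \sigma t/(\delta+\sigma t)$, so $\phi(x)/x$ is automatically nonincreasing). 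That $g(t)/t$ is nonincreasing is a standard consequence of concavity with $g(0)=0$: for $0 < t_1 < t_2$, $g(t_1) = g\!\big(\tfrac{t_1}{t_2} t_2 + (1-\tfrac{t_1}{t_2})\cdot 0\big) \geq \tfrac{t_1}{t_2} g(t_2)$, so $g(t_1)/t_1 \geq g(t_2)/t_2$. Assembling (i)--(iii) via the intermediate value theorem completes the argument; I would also remark that because $g'(0) > 0$ the map $t \mapsto \sigma t/(\delta+\sigma t)$ is strictly increasing near $0$, which upgrades ``nonincreasing'' to ``strictly decreasing'' for $H$ and hence pins down \emph{uniqueness} cleanly.
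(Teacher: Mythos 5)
Your argument is correct, but it follows a genuinely different route from the paper's. The paper works directly with $\F(x) = \G(x)(1-x) - \gamma x$, where $\G(x) = \beta_d k_d x + \beta_e k_e\,\sigma g(sx)/(\delta+\sigma g(sx))$: it computes $\F''<0$ from $g''<0$, deduces that $\F'$ has a unique zero $c\in(0,1)$ (using $\F'(0)=\gamma(R^{\mathrm{ru}}_0-1)>0$ and $\F'(1)<0$), and then locates the unique root $x^*\in(c,1)$ of $\F$ from the sign pattern of $\F$ on $(0,c)$ and $(c,1)$. You instead divide the equilibrium equation by $x$ and show the resulting function $H$ is \emph{strictly monotone}, which reduces everything to a single application of the intermediate value theorem. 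Your key structural observation --- that $\phi(x)=\sigma g(sx)/(\delta+\sigma g(sx))$ is concave with $\phi(0)=0$ (as an increasing concave function of the concave $g(sx)$), hence $\phi(x)/x$ is nonincreasing, so $\phi(x)\psi(x)=\phi(x)/x-\phi(x)$ is strictly decreasing --- is sound, and your boundary computation $\lim_{x\to 0^+}H(x)=\gamma R^{\mathrm{ru}}_0$ correctly ties the threshold to the existence of a root. Your approach is arguably cleaner for this lemma, avoiding second derivatives and the auxiliary critical point $c$; what the paper's approach buys is the explicit fact that $\F'(x^*)<0$, which is reused verbatim in the proof of local stability of $(x^*,y^*)$ (Lemma \ref{lem:pos_localstability}). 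That fact is also recoverable from your setup --- since $\F(x)=x\bigl(H(x)-\gamma\bigr)$, one gets $\F'(x^*)=x^*H'(x^*)<0$ --- but you would need to record it separately if the subsequent lemma is to go through unchanged.
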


\begin{proof}
    Setting $\dot{y} = 0$ in \eqref{eqn:meanfield} and solving for $y$ yields
    \begin{equation}\label{eqn:y}
   	 y = \frac{\sigma g(sx)}{\delta + \sigma g(sx)}\,.
    \end{equation}
    Setting $\dot{x} = 0$ and using \eqref{eqn:y} implies that the equilibrium value of $x$
    must satisfy
    \[
	    \F(x) := \bigg[\beta_dk_d x + \beta_ek_e\frac{\sigma g(sx)}{\delta+\sigma g(sx)}\bigg](1 - x) - \gamma x = 0\,.
    \]
    We now show that $\F(x) = 0$ has a unique solution $x^*\in(0,1)$. For convenience, let 
    \begin{equation}\label{eq:G}
   	 \G(x) = \beta_dk_d x + \beta_ek_e\frac{\sigma g(sx)}{\delta + \sigma g(sx)}\,.
    \end{equation}
The first two derivatives of $\F$ are
    \begin{equation}\label{eq:F'}
   	 \F'(x) = \G'(x)(1 - x) - \G(x) - \gamma
    \end{equation}
    and
    \[
	    \F''(x) = \G''(x)(1 - x) - 2\G'(x)\,.
    \]
    Because $g'(sx) > 0$, we have
    \begin{equation}\label{eq:G'}
    	\G'(x) = \beta_dk_d + \beta_ek_e \sigma g'(sx) s \frac{\delta}{(\delta+\sigma g(sx))^2} > 0
    \end{equation}
    for any $x\geq 0$. Furthermore, 
    \[
	    \G''(x) = \beta_ek_e\sigma s\delta\bigg(\frac{g''(sx)s}{(\delta + \sigma g(sx))^2} - \frac{2g'(sx)^2s}{(\delta + \sigma g(sx))^3}\bigg) < 0
    \]
    for any $x\geq 0$ because $g'(sx) > 0$ and $g''(sx) < 0$. Therefore, $\F''(x) < 0$, which implies that $\F'(x)$ is decreasing. At $x = 0$, we have
    \[
	    \F'(0) = \G'(0) - G(0) - \gamma = \beta_dk_d + \frac{\beta_ek_e\sigma g'(0)}{\delta} - \gamma = \gamma(R^{\mathrm{ru}}_0 - 1) > 0
    \]
    and
    \[
	    \F'(1) = -\G(1) - \gamma < 0\,.
    \]
    Therefore, $\F'(x) = 0$ has a unique solution $c\in (0,1)$.  Furthermore, $\F'(x) > 0$ for $x\in(0,c)$ and $\F'(x) < 0$ for $x\in (c,1)$. Because $\F(0) = 0$ and $\F(1) = -\gamma < 0$, no solution of $\F(x) = 0$ lies in $(0,c)$ and there is a unique solution $x^*\in (c,1)$.

    Using \eqref{eqn:y}, we see that the equilibrium value of $y$ is
    \begin{equation}\label{eqn:y*}
    	y^* = \frac{\sigma g(sx^*)}{\delta + \sigma g(sx^*)} \in (0,1)\,,
    \end{equation}
    which implies that the mean-field model \eqref{eqn:meanfield} has a unique positive equilibrium $(x^*,y^*)$.
    
    \end{proof}

\begin{lemma}\label{lem:pos_localstability}
    When $R^{\mathrm{ru}}_0 > 1$, the unique positive equilibrium (see Lemma \ref{lem:existence}) is locally asymptotically stable.
\end{lemma}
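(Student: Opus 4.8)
The plan is to linearize the system \eqref{eqn:meanfield} about the positive equilibrium $(x^*,y^*)$ produced by Lemma~\ref{lem:existence} and to verify the Routh--Hurwitz conditions for a $2\times2$ system: if the Jacobian $J^*$ evaluated at $(x^*,y^*)$ satisfies $\text{Tr}(J^*)<0$ and $\det(J^*)>0$, then both of its eigenvalues have negative real part and the equilibrium is locally asymptotically stable. First I would record the Jacobian of \eqref{eqn:meanfield} at $(x^*,y^*)$,
\[
J^* = \begin{bmatrix}
\beta_dk_d(1-x^*) - (\beta_dk_dx^*+\beta_ek_ey^*) - \gamma & \beta_ek_e(1-x^*)\\
\sigma s\, g'(sx^*)(1-y^*) & -\delta-\sigma g(sx^*)
\end{bmatrix}\,,
\]
and then simplify its entries using the equilibrium relations. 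Setting $\dot y=0$ gives $1-y^* = \delta/(\delta+\sigma g(sx^*))$, which is \eqref{eqn:y*}, and setting $\dot x=0$ gives $\beta_dk_dx^*+\beta_ek_ey^* = \gamma x^*/(1-x^*)$, equivalently $\G(x^*) = \gamma x^*/(1-x^*)$ with $\G$ as in \eqref{eq:G}. Substituting the latter shows that the entry $J^*_{11}$ equals $\beta_dk_d(1-x^*) - \gamma/(1-x^*)$.

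For the determinant, I would insert $1-y^* = \delta/(\delta+\sigma g(sx^*))$, factor $-(\delta+\sigma g(sx^*))$ out of $\det(J^*) = J^*_{11}J^*_{22} - J^*_{12}J^*_{21}$, and collect the remaining terms to reach
\[
\det(J^*) = -\bigl(\delta+\sigma g(sx^*)\bigr)\Bigl[(1-x^*)\,\G'(x^*) - \tfrac{\gamma}{1-x^*}\Bigr]\,,
\]
with $\G'$ given by \eqref{eq:G'}. Using $\G(x^*)=\gamma x^*/(1-x^*)$ in the formula \eqref{eq:F'} for $\F'$ shows that the bracketed factor is precisely $\F'(x^*)$. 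Since the proof of Lemma~\ref{lem:existence} establishes that $\F'$ is strictly decreasing with its unique zero $c$ satisfying $c<x^*$, we have $\F'(x^*)<0$, and therefore $\det(J^*)>0$.

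For the trace, the entry $J^*_{22}=-\delta-\sigma g(sx^*)$ is already negative, so it remains to control $J^*_{11} = \beta_dk_d(1-x^*)-\gamma/(1-x^*)$. From \eqref{eq:G'}, the second summand of $\G'(x^*)$ is nonnegative, hence $\G'(x^*)\ge\beta_dk_d$, and so $J^*_{11} \le (1-x^*)\,\G'(x^*) - \gamma/(1-x^*) = \F'(x^*) < 0$. Thus $\text{Tr}(J^*) = J^*_{11}+J^*_{22}<0$, the Routh--Hurwitz conditions hold, and the conclusion follows. The main obstacle will be the algebraic bookkeeping that identifies the bracketed factor of $\det(J^*)$ with $\F'(x^*)$ via the equilibrium relation $\G(x^*)=\gamma x^*/(1-x^*)$; once that is set up, the two sign checks simply reuse the inequality $\F'(x^*)<0$ already proved for Lemma~\ref{lem:existence}.
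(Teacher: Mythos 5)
Your proposal is correct and follows essentially the same route as the paper: linearize at $(x^*,y^*)$, verify the Routh--Hurwitz conditions $\operatorname{Tr}(J^*)<0$ and $\det(J^*)>0$, and obtain the determinant sign from the inequality $\F'(x^*)<0$ established in the existence lemma. The only (harmless) difference is that you also derive the trace condition from $\F'(x^*)<0$ via $\G'(x^*)\ge\beta_dk_d$, whereas the paper gets $\gamma>\beta_dk_d(1-x^*)$ directly from the equilibrium equation; both are immediate and the arguments are otherwise identical.
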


\begin{proof}
    Linearizing around the positive equilibrium $(x^*,y^*)$ yields the Jacobian matrix
    \[
   	 J(x^*,y^*) = \begin{bmatrix}
        		\beta_dk_d(1 - x^*) - (\beta_dk_dx^* + \beta_ek_ey^*) - \gamma & \beta_ek_e(1 - x^*)\\
        		\sigma s g'(sx^*)(1 - y^*) & -\sigma g(sx^*) - \delta
    \end{bmatrix}\,.
    \]
    We observe that $x^*$ satisfies
    \[
    \gamma x^* = (\beta_dk_dx^* + \beta_ek_ey^*)(1 - x^*)  > \beta_dk_dx^* (1 - x^*) \,,
    \]
    which implies that $\gamma > \beta_dk_d(1 - x^*)$. Therefore,
    \[
	    \text{Tr}(J(x^*,y^*)) = \beta_dk_d(1 - x^*) - (\beta_dk_dx^* + \beta_ek_ey^*) - \gamma  - \sigma g(sx^*) - \delta < 0\,.
    \]
To see that that $\det(J(x^*,y^*)) > 0$, we calculate
    \begin{align*}
	   \det(J(x^*,y^*)) &= (\beta_dk_dx^* + \beta_ek_ey^* + \gamma)(\sigma g(sx^*)+\delta) - (1 - x^*)(\beta_dk_d(\sigma g(sx^*) + \delta) + \beta_e k_e\sigma s g'(sx^*)(1 - y^*))\\
		    &= (\sigma g(sx^*) + \delta)\bigg(\beta_dk_dx^* + \beta_ek_ey^* + \gamma - (1 - x^*)\bigg(\beta_dk_d + \beta_e k_e\sigma s g'(sx^*)\frac{\delta}{(\delta + \sigma g(sx^*))^2}\bigg)\bigg)\\
		    &= (\sigma g(sx^*) + \delta)(\beta_dk_dx^* + \beta_ek_ey^* + \gamma - (1 - x^*)\G'(x^*))\,,
    \end{align*}
    where the second equality follows from \eqref{eqn:y*} and {the third equality follows from the definition of $\G$ in \eqref{eq:G} and the calculation of $\G'$ in \eqref{eq:G'}.}
       By Lemma \ref{lem:existence}, we also have $\F'(x) < 0$ for $x\in(c,1)$. Because $x^*\in(c,1)$, we have $\F'(x^*) < 0$. Using \eqref{eq:F'} yields
    \[
   	 \G'(x^*)(1 - x^*) < \G(x^*) + \gamma = \beta_dk_dx^* + \beta_ek_e\frac{\sigma g(sx^*)}{\delta + \sigma g(sx^*)} + \gamma = \beta_dk_dx^* + \beta_ek_e y^* + \gamma\,.
    \]
    Therefore, $\det(J(x^*,y^*)) > 0$. 
    
Because $\text{Tr}(J(x^*,y^*) < 0$ and $\det(J(x^*,y^*)) > 0$, all eigenvalues of the matrix $J(x^*,y^*)$ have negative real parts, which implies that the positive equilibrium is locally asymptotically stable.
\end{proof}

\begin{theorem}\label{thm:endemic}
When $R^{\mathrm{ru}}_0 > 1$, the mean-field model \eqref{eqn:meanfield} has a unique positive equilibrium. Furthermore, the model's DFE is unstable and the positive equilibrium is locally asymptotically stable.
\end{theorem}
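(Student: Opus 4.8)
The plan is to assemble Theorem~\ref{thm:endemic} from the three results that immediately precede it, supplying only the short missing piece, namely the instability of the DFE when $R_0^{\mathrm{ru}} > 1$. The existence and uniqueness of a positive equilibrium $(x^*,y^*)$ is exactly the statement of Lemma~\ref{lem:existence}, and its local asymptotic stability is exactly Lemma~\ref{lem:pos_localstability}; I would invoke both directly. So the proof reduces to showing that, under $R_0^{\mathrm{ru}} > 1$, the disease-free equilibrium $\bm 0$ is unstable.

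For that last point I would reuse the Jacobian at the DFE already computed in the proof of Theorem~\ref{thm:extinction}, namely $J(\bm 0) = \left[\begin{smallmatrix} \beta_d k_d - \gamma & \beta_e k_e \\ \sigma s g'(0) & -\delta \end{smallmatrix}\right]$. Its determinant was computed there to be $\det(J(\bm 0)) = \gamma\delta - \beta_d k_d \delta - \beta_e \sigma g'(0) k_e s = \gamma\delta(1 - R_0^{\mathrm{ru}})$, which is strictly negative when $R_0^{\mathrm{ru}} > 1$. Since $J(\bm 0)$ is a real $2\times 2$ matrix, its two eigenvalues multiply to $\det(J(\bm 0)) < 0$; a negative product forces the eigenvalues to be real and of opposite sign, so $J(\bm 0)$ has a positive eigenvalue. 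Hence the linearization has an unstable direction and the DFE is unstable (indeed a saddle point). One may note, if desired, that reality of the eigenvalues also follows directly from the discriminant $\mathrm{Tr}(J(\bm 0))^2 - 4\det(J(\bm 0))$ being positive whenever $\det(J(\bm 0)) < 0$, so no case analysis on the sign of the trace is needed.

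I expect no real obstacle here: the substantive analytical work was already carried out in Lemmas~\ref{lem:existence} and \ref{lem:pos_localstability}, and the instability of the DFE is a one-line sign check on a determinant that has already been evaluated in the proof of Theorem~\ref{thm:extinction}. The theorem is essentially a consolidation of these facts, and the only thing to be careful about is citing each ingredient under the correct hypothesis $R_0^{\mathrm{ru}} > 1$.
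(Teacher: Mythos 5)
Your proposal is correct and follows essentially the same route as the paper: existence, uniqueness, and local stability of the positive equilibrium are cited from Lemmas~\ref{lem:existence} and \ref{lem:pos_localstability}, and instability of the DFE is deduced from $\det(J(\bm 0)) = \gamma\delta(1 - R^{\mathrm{ru}}_0) < 0$ forcing real eigenvalues of opposite sign. Your added note on the discriminant is a harmless extra justification of the same sign argument.
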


\begin{proof}
    The existence, uniqueness, and local stability of the positive equilibrium follow from Lemma \ref{lem:existence} and Lemma \ref{lem:pos_localstability}. From Theorem \ref{thm:extinction}, it follows that $\det(J(\bm 0)) = \gamma\delta(1 - R^{\mathrm{ru}}_0) < 0$, so $J(\bm 0)$ has a positive eigenvalue and a negative eigenvalue, which implies that the DFE is unstable.
\end{proof}

\begin{remark}\label{remark:size}
   We can express the basic reproduction number $R^{\mathrm{ru}}_0$ for \eqref{eqn:meanfield} in terms of the numbers of edges and hyperedges, rather than in terms of the degree and hyperdegree (which is what we did in \eqref{eqn:R0}).
    The number of edges is $M_d = {Nk_d}/{2}$, and the number of hyperedges is 
   $M_e = {Nk_e}/{s}$. Therefore,
   \begin{equation}\label{eqn:R0_alt}
  	 R^{\mathrm{ru}}_0 = \frac{2\beta_dM_d}{N\gamma} + \frac{\beta_e\sigma g'(0)M_e s^2}{N\gamma\delta}\,.
   \end{equation}
   The expression \eqref{eqn:R0_alt} for $R^{\mathrm{ru}}_0$ implies that 
   the hyperedge size $s$ has
   a strong influence on $R^{\mathrm{ru}}_0$. 
   For example, if one doubles the hyperedge size, then to keep $R^{\mathrm{ru}}_0$ unchanged, one needs to quadruple the recovery rates $\gamma$ or $\delta$ or to decrease the infection rates $\beta_e$ or $\sigma$ by a factor of four.
    Accordingly, reducing the hyperedge size $s$ (e.g., by limiting the size of indoor events) is 
    an effective way to decrease $R^{\mathrm{ru}}_0$ if the dyadic transmission
    satisfies  
   $\frac{2\beta_dM_d}{N\gamma} < 1$. 
\end{remark}


\section{Simulations of our stochastic model \eqref{eq:node_infected}--\eqref{eq:h_contaminated} and the mean-field model \eqref{eqn:meanfield}}\label{sec:simulation}

In this section, we present the results of  
simulations of the stochastic model \eqref{eq:node_infected}--\eqref{eq:h_contaminated} to gain insight into the accuracy of the mean-field approximation \eqref{eqn:meanfield}.\footnote{We do not include comparisons between
simulations of 
\eqref{eq:node_infected}--\eqref{eq:h_contaminated} and the mean-field approximation \eqref{eqn:xiyl} 
because complete $(2,s)$-uniform hypergraphs have a very large number of hyperedges, and it is thus computationally costly to simulate the stochastic model \eqref{eq:node_infected}--\eqref{eq:h_contaminated}.}
We also use simulations to investigate
the effects of various network parameters and other model parameters on the disease dynamics, especially when we incorporate heterogeneity into hypergraphs.

In Section \ref{sec:algo}, we specify our simulation algorithm. In Section \ref{sec:compare}, we compare simulations of the stochastic model \eqref{eq:node_infected}--\eqref{eq:h_contaminated} to simulations of the mean-field approximation \eqref{eqn:meanfield}. In Sections \ref{sec:eff_size} and \ref{sec:eff_delta}, we discuss the effects on disease dynamics} of hypergraph-size distribution and nonuniform hyperedge-recovery rates, respectively.


\subsection{Simulation algorithm}\label{sec:algo}

We use a common individual-level stochastic-simulation algorithm~\cite{graham2013stochastic}. At 
time $t$, we generate a uniformly random vector $\bm r\in[0,1]^{N + M_e}$. 
The algorithm, which runs through all nodes from $i = 1$ to $i = N$ and all hyperedges in $E_e$ from $\ell = 1$ to $\ell = M_e$, proceeds as follows.
\begin{enumerate}
    \item If node $i$ is susceptible, it becomes infected at time $t + \Delta t$ if
    \[
	    r_i < 1- \exp\bigg(-\bigg(\beta_d \sum_{h\in E_d} \I_{ih} \big(\sum_{j\neq i} X_{j} \I_{jh}\big) + \beta_e \sum_{h\in E_e} \I_{ih} Y_h\bigg)\Delta t\bigg)\,.
    \]
    \item If node $i$ is infected, it becomes susceptible at time $t + \Delta t$ if
    \[
	    r_i <1 - \exp(-\gamma\Delta t)\,.
    \]
    \item If hyperedge $\ell$ is uncontaminated, it becomes contaminated at time $t + \Delta t$ if 
    \[
	    r_{N + \ell} < 1 - \exp\bigg(-\sigma  g\big(\sum_{j\in V} X_{j} \I_{j\ell}\big)\Delta t\bigg)\,.
    \]
    In our simulations, we use $g(x) = \arctan(x)$, which satisfies the assumptions in Section \ref{sec:infection}.
    \item If hyperedge $\ell$ is contaminated, it becomes uncontaminated at time $t + \Delta t$ if
    \[
	    r_{N + \ell} < 1 - \exp(-\delta\Delta t)\,.
    \]
\end{enumerate}

We use the step size $\Delta t = 0.1$, and we run our simulations for either 400 or 800 time steps. (We use the latter in situations when it takes longer for the system \eqref{eq:node_infected}--\eqref{eq:h_contaminated}  to converge.)
We specify the initial fraction $p_0$ of infected nodes, which we choose uniformly at random.
In all of our simulations, there are initially no contaminated hyperedges. For each choice of initial condition, parameter values, and hypergraph, we run 10 simulations on the same hypergraph and average our results to smooth out the simulation results. Our code is available at \url{https://github.com/TungDaoNguyen/SIS_Hypergraph}.


\subsection{Comparison between the the stochastic model \eqref{eq:node_infected}--\eqref{eq:h_contaminated} and the mean-field model \eqref{eqn:meanfield}} 
\label{sec:compare}

We first compare 
simulations of the individual-level stochastic model \eqref{eq:node_infected}--\eqref{eq:h_contaminated} and the mean-field approximation \eqref{eqn:meanfield} for regular $(2,s)$-uniform hypergraphs. 

For a set of simulations, we generate a single regular $(2,s)$-uniform hypergraph with $N = 100$ nodes, hyperedge size $s = 4$, degree $k_d = 20$, and hyperdegree $k_e = 4$. This hypergraph has 1000 edges and 100 hyperedges. 
We compare the proportion $\bar{X}(t)$ of infected nodes 
and the proportion $\bar{Y}(t)$ of contaminated hyperedges 
in the stochastic model \eqref{eq:node_infected}--\eqref{eq:h_contaminated} with the corresponding proportions in the mean-field approximation \eqref{eqn:meanfield}.
We present results from one set of simulations, and we observe the same qualitative dynamics for other sets of simulations (using several different regular $(2,s)$-uniform hypergraphs).

In Figure \ref{fig:meanfield_regular}, we illustrate that the mean-field model provides a good approximation of the stochastic simulations in two scenarios: (a) when we choose parameters
so that $R^{\mathrm{ru}}_0$ in \eqref{eqn:R0} is well above 1
(the value of $R^{\mathrm{ru}}_0$ is 3.60); and (b) when we choose parameters
so that $R^{\mathrm{ru}}_0$ in \eqref{eqn:R0} is reasonably below 1
(the value of $R^{\mathrm{ru}}_0$ is 0.80). 
For convenience, we refer to the former scenario as an ``endemic scenario" and the latter scenario as a ``disease-free scenario".

\begin{figure}
    \centering
    \begin{subfigure}[t]{0.45\textwidth}
        \centering
        \includegraphics[width=\linewidth]{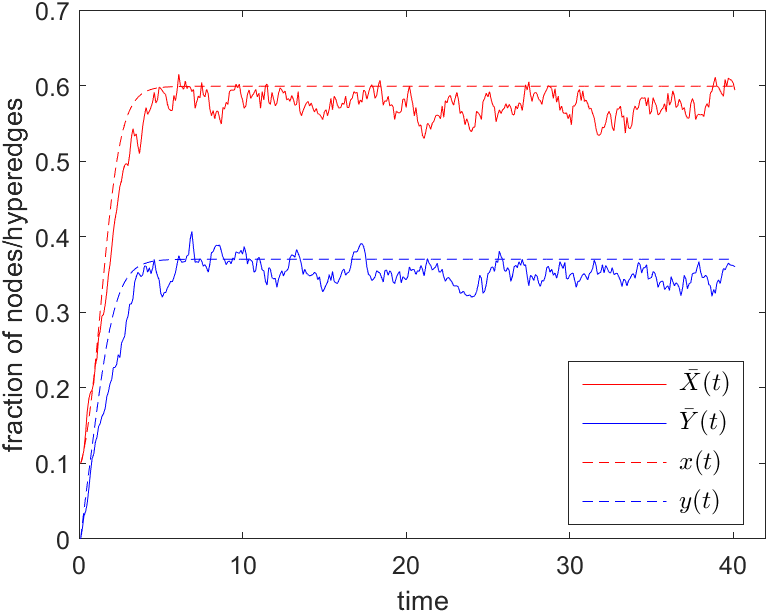} 
        \caption{Endemic scenario: individual infection rate $\beta_d = 0.1$, environmental infection rate $\beta_e = 0.2$, environmental contamination rate $\sigma = 0.5$, and initial infected-node proportion $p_0 = 0.1$} 
    \end{subfigure}
    \hfill
    \begin{subfigure}[t]{0.45\textwidth}
        \centering
        \includegraphics[width=\linewidth]{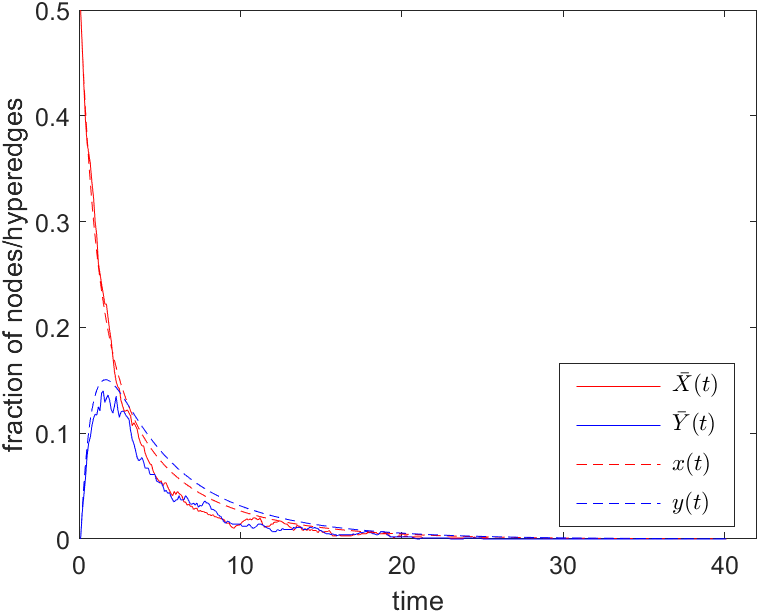} 
        \caption{Disease-free scenario: individual infection rate $\beta_d = 0.03$, environmental infection rate $\beta_e = 0.05$, environmental contamination rate $\sigma = 0.25$, and initial infected-node proportion $p_0 = 0.5$} 
    \end{subfigure}
    \caption{Proportions of infected nodes (red) and contaminated hyperedges (blue), averaged over 10 simulations, in the individual-level stochastic model \eqref{eq:node_infected}--\eqref{eq:h_contaminated} (solid curves) and proportions of infected nodes (red) and contaminated hyperedges (blue) in the mean-field model \eqref{eqn:meanfield} (dashed curves) on a regular $(2,s)$-uniform hypergraph. 
    The recovery rates of the nodes and hyperedges are $\gamma = \delta = 1$. 
    In (a), we show results for an endemic scenario. In (b), we show results for a disease-free scenario.}\label{fig:meanfield_regular}
\end{figure}

Although the mean-field approximation \eqref{eqn:meanfield} is designed for regular $(2,s)$-uniform hypergraphs, we also test it on simulations of the model \eqref{eq:node_infected}--\eqref{eq:h_contaminated} on an ER
hypergraph. 
We generate an ER hypergraph with $N = 100$ nodes, 1000 edges, and 100 hyperedges of size $4$. This hypergraph has a mean degree of $\bar{k}_d = 20$ and a mean hyperdegree of $\bar{k}_e = 4$. We use these values in the mean-field model instead of $k_d$ and $k_e$. In Figure \ref{fig:meanfield_ER}, we observe that the mean-field model is just as successful on the ER hypergraph as it was on the regular $(2,s)$-uniform hypergraph.
We present results from one set of simulations, and we observe the same qualitative dynamics for other sets of simulations (using several different ER hypergraphs).

Recall that the expression \eqref{eqn:R0} for $R^{\mathrm{ru}}_0$ in the mean-field model \eqref{eqn:meanfield} has separate contributions from the dyadic and polyadic
transmission modes. 
To examine this feature, we perform further simulations of the individual-level stochastic model \eqref{eq:node_infected}--\eqref{eq:h_contaminated} and the mean-field model \eqref{eqn:meanfield} on the ER hypergraph above. 
In Figure \ref{fig:R0_ER}(a), we consider parameters in which
the dyadic transmission mode (i.e., transmission via edges) gives the dominant contribution to $R^{\mathrm{ru}}_0$.
We observe that halving the individual infection rate $\beta_d$ drives a disease to extinction quickly, whereas halving the environmental infection rate $\beta_e$ or environmental contamination rate $\sigma$ do not significantly impact the disease dynamics. In Figure \ref{fig:R0_ER}(b), we consider parameters in which the polyadic transmission mode (i.e., transmission via hyperedges) gives the dominant contribution to $R^{\mathrm{ru}}_0$. In this case, halving $\beta_e$ or $\sigma$ quickly drives a disease to extinction. Halving $\beta_d$ still leads to disease extinction, but the disease is eradicated
more slowly than halving $\beta_e$ or $\sigma$.
From these simulations, we conclude that we are able to distinguish between the effects of the two transmission modes in the individual-level stochastic model \eqref{eq:node_infected}--\eqref{eq:h_contaminated} but that the separation between their effects is not as strong as suggested by the mean-field model \eqref{eqn:meanfield}.

\begin{figure}
    \centering
    \begin{subfigure}[t]{0.45\textwidth}
        \centering
        \includegraphics[width=\linewidth]{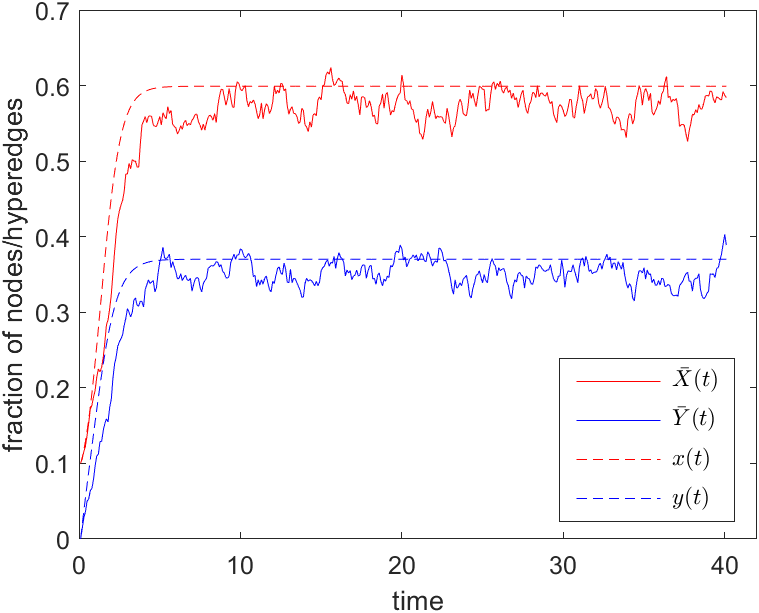} 
        \caption{Endemic scenario: individual infection rate $\beta_d = 0.1$, environmental infection rate $\beta_e = 0.2$, environmental contamination rate $\sigma = 0.5$, and initial infected-node proportion $p_0 = 0.1$}
    \end{subfigure}
    \hfill
    \begin{subfigure}[t]{0.45\textwidth}
        \centering
        \includegraphics[width=\linewidth]{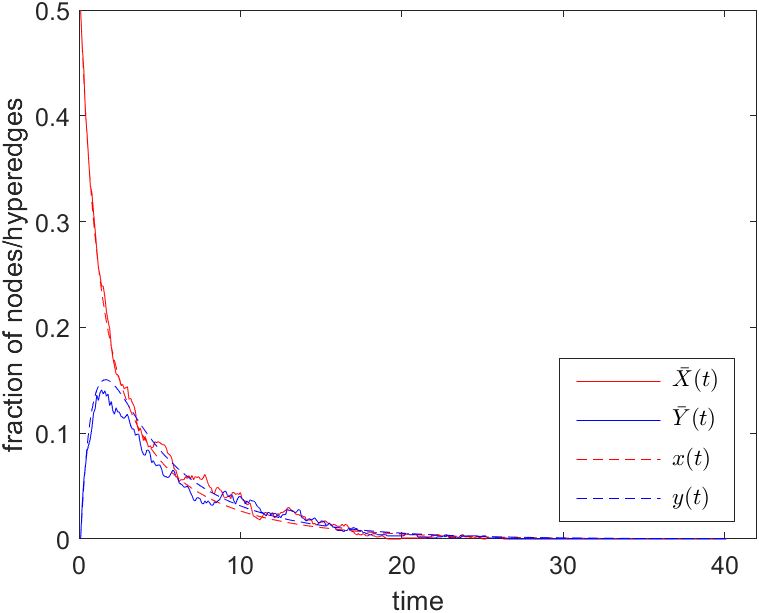} 
        \caption{Disease-free scenario: individual infection rate $\beta_d = 0.03$, environmental infection rate $\beta_e = 0.05$, environmental contamination rate $\sigma = 0.25$, and initial infected-node proportion $p_0 = 0.5$}
    \end{subfigure}
    \caption{Proportions of infected nodes (red) and contaminated hyperedges (blue), averaged over 10 simulations, from the individual-level stochastic model \eqref{eq:node_infected}--\eqref{eq:h_contaminated} (solid curves) and {proportions of infected nodes (red) and contaminated hyperedges (blue) in the} mean-field model \eqref{eqn:meanfield} (dashed curves) on an ER hypergraph. 
    The recovery rates of the nodes and hyperedges are $\gamma = \delta = 1$. 
    In (a), we show results for an endemic scenario. In (b), we show results for a disease-free scenario.
    }
    \label{fig:meanfield_ER}
\end{figure}

\begin{figure}
    \centering
    \begin{subfigure}[t]{0.45\textwidth}
        \centering
        \includegraphics[width=\linewidth]{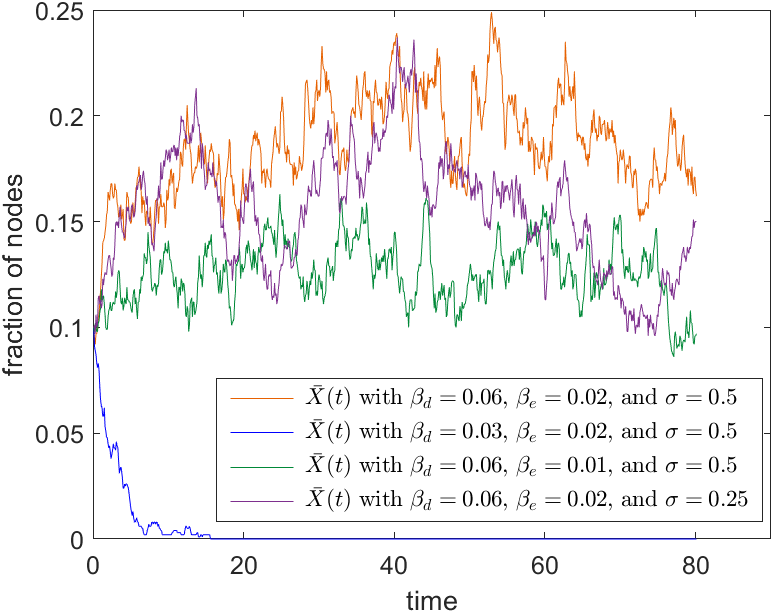} 
        \caption{Dyadic transmission mode is dominant} 
    \end{subfigure}
    \hfill
    \begin{subfigure}[t]{0.45\textwidth}
        \centering
        \includegraphics[width=\linewidth]{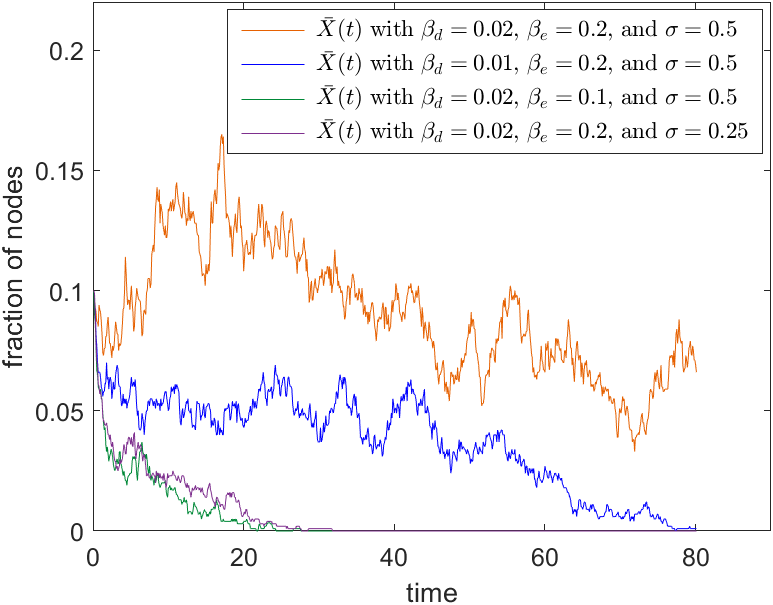} 
        \caption{Polyadic transmission mode is dominant} 
    \end{subfigure}
    \caption{Proportions of infected nodes, averaged over 10 simulations, from the individual-level stochastic model \eqref{eq:node_infected}--\eqref{eq:h_contaminated} on an ER hypergraph
    when (a) the dyadic transmission mode is dominant and (b) the polyadic transmission mode is dominant. The recovery rates of the nodes and hyperedges are $\gamma = \delta = 1$. 
    The initial proportion of infected nodes is $p_0 = 0.1$.
    }\label{fig:R0_ER}
\end{figure}


\subsection{Effect of hypergraph-size distribution on disease dynamics}\label{sec:eff_size}

We now examine the effect of network structure on the disease dynamics. We are motivated by real-life policies to combat and control disease. Many such policies, such as work-from-home orders and restrictions on the sizes of gatherings \cite{gupta2020mandated}, lead to different hyperedge-size distributions.
For instance, restricting the sizes of gatherings to ensure that they are not too large reduces the fraction of large hyperedges and increases the fraction of small hyperedges. Accordingly, we explore
how the 
hyperedge-size distribution affects disease dynamics in both endemic and disease-free scenarios.


\subsubsection{Uniform hyperedge size}

For hypergraphs with a uniform hyperedge size, Remark \ref{remark:size} suggests that hyperedge size has a strong impact on the basic reproduction number $R^{\mathrm{ru}}_0$. For example, in the mean-field model \eqref{eqn:meanfield} on regular $(2,s)$-uniform hypergraphs, doubling the hyperedge size necessitates quadrupling
a recovery rate $\gamma$ or $\delta$ or decreasing an infection rate $\beta_e$ or $\sigma$ by a factor of four to attain the same $R^{\mathrm{ru}}_0$. 

We now perform simulations to examine this impact
for hypergraphs that have uniform hyperedge sizes but are not regular.
We generate one ER hypergraph of each of two types: hypergraph S4 has 100 nodes, 1000 edges, and 100 hyperedges of size 4; and hypergraph S8 has 100 nodes, 1000 edges, and 100 hyperedges of size 8. In Figure \ref{fig:x2size}, we show the proportion of infected nodes in simulations of the stochastic model \eqref{eq:node_infected}--\eqref{eq:h_contaminated} on these two hypergraphs.
We choose the same parameter values for hypergraphs S4 and S8, except for varying one parameter for hypergraph S8.
We observe that decreasing the environmental infection rate $\beta_e$ by a factor of four or quadrupling the enviromental recovery rate $\delta$
leads to hypergraph S8 yielding roughly similar values of the proportion of infected nodes as
in hypergraph S4. 
This observation is consistent with our observation in Remark \ref{remark:size}.

\begin{figure}
    \centering
    \begin{subfigure}[t]{0.45\textwidth}
        \centering
        \includegraphics[width=\linewidth]{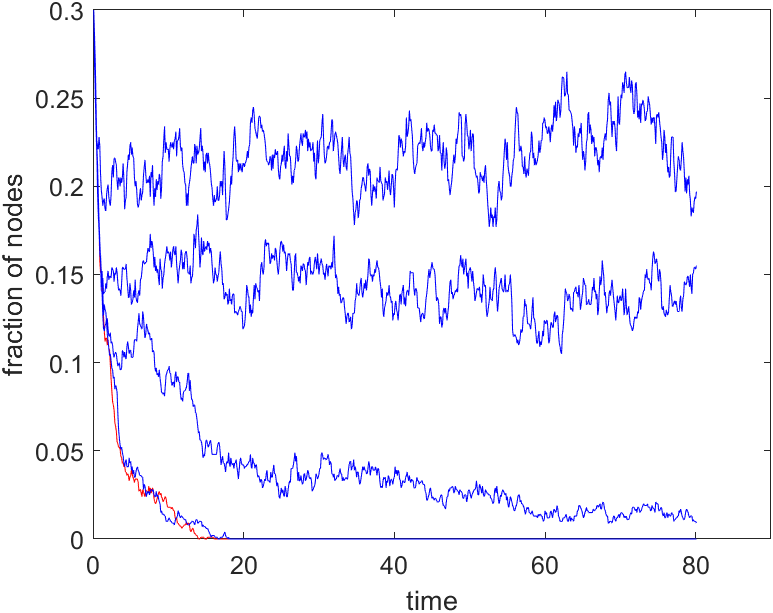} 
        \caption{Varying the environmental infection rate $\beta_e$ in simulations on the hypergraph S8} 
    \end{subfigure}
    \hfill
    \begin{subfigure}[t]{0.45\textwidth}
        \centering
        \includegraphics[width=\linewidth]{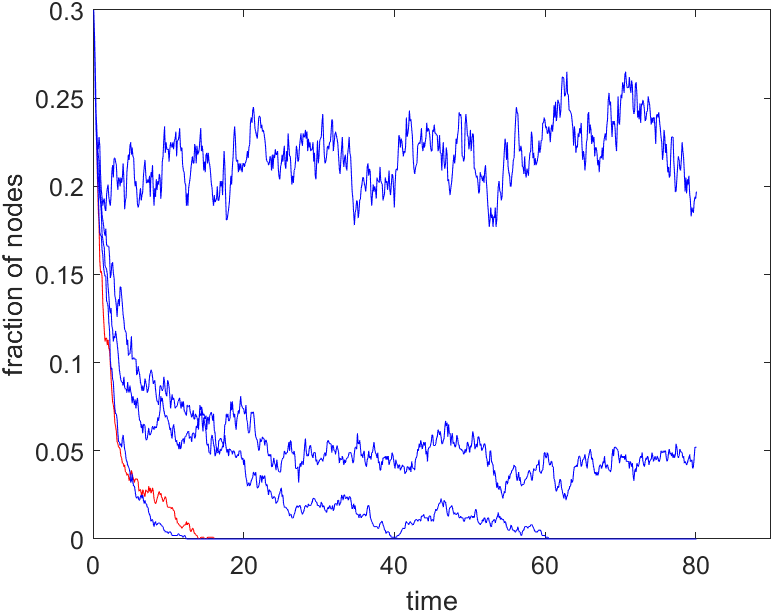} 
        \caption{Varying the environmental recovery rate $\delta$ in simulations on the hypergraph S8}
    \end{subfigure}
    \caption{Proportions of infected nodes, averaged over 10 simulations, in the individual-level stochastic model \eqref{eq:node_infected}--\eqref{eq:h_contaminated} on the hypergraphs S4 (red) and S8 (blue). For our simulations on hypergraph S4, we use the parameter values
    $\beta_d = 0.01$, $\beta_e = 0.1$, $\sigma = 0.5$, $\gamma = \delta = 1$, and $p_0 = 0.3$. (a) For our simulations on the hypergraph S8, we use the same parameter values as for hypergraph S4 except for the environmental infection rate $\beta_e$. From top to bottom, the values of $\beta_e$ are $0.1$, $0.075$, $0.05$, and $0.025$. (b) For our similations on the hypergraph S8, we use the same parameter values as for hypergraph S4 for all parameters except for the 
    environmental recovery rate $\delta$. From top to bottem, the values of $\delta$ are $1$, $2$, $3$, and $4$.}\label{fig:x2size}
\end{figure}


\subsubsection{Nonuniform hyperedge sizes}

We now consider hypergraphs with nonuniform hyperedge sizes and examine how the hyperedge-size distribution affects the dynamics of the stochastic model \eqref{eq:node_infected}--\eqref{eq:h_contaminated}. To understand the impact of the hyperedge-size distribution on the disease dynamics, we generate one realization for each of the three types of ER hypergraphs with different hyperedge-size sequences and simulate \eqref{eq:node_infected}--\eqref{eq:h_contaminated} on them.

Each of the three ER hypergraphs that we generate has $N = 100$ nodes and $M_d = 1000$ edges. Hypergraph H1 has 100 hyperedges of size 8, hypergraph H2 has 50 hyperedges of size 4 and 50 hyperedges of size 12, and hypergraph H3 has 80 hyperedges of size 3 and 20 hyperedges of size 28. These values ensure that all three hypergraphs have a mean hyperedge size of 8. 

Let $\bar{X}_1(t)$, $\bar{X}_2(t)$, and $\bar{X}_3(t)$ denote the proportions of infected nodes in the stochastic model \eqref{eq:node_infected}--\eqref{eq:h_contaminated} on H1, H2, and H3, respectively. Additionally, let $\bar{Y}_1(t)$, 
$\bar{Y}_2(t)$, and $\bar{Y}_3(t)$ denote the proportions of contaminated hyperedges in the stochastic model \eqref{eq:node_infected}--\eqref{eq:h_contaminated} on H1, H2, and H3, respectively.

In Figure \ref{fig:non-unif-size_EN}, we show the proportions of infected nodes and contaminated hyperedges in an endemic scenario (i.e., when we choose the parameters so that $R^{\mathrm{ru}}_0$, which has the value 2.96, for the mean-field model \eqref{eqn:meanfield} is sufficiently above 1) for the three hypergraphs H1, H2, and H3. 
Although the proportion of contaminated hyperedges for the hypergraph H3 seems to be slightly smaller than the proportions for hypergraphs H1 and H2, we do not observe any noticeable differences in the endemic levels for the three hypergraphs.

In Figure \ref{fig:non-unif-size_DF}, we show the proportions of infected nodes and contaminated hyperedges 
when we choose the parameters so that $R^{\mathrm{ru}}_0 = 1.04$ for the mean-field model \eqref{eqn:meanfield} 
for the three hypergraphs H1, H2, and H3. We observe that 
the disease becomes extinct much more slowly on the hypergraph H3 than on hypergraphs H1 and H2.

In Figure \ref{fig:non-unif-size_DF*}, we examine how the disease dynamics changes when we increase
the environmental infection rate $\beta_e$ from $0.02$ (which is the value in Figure \ref{fig:non-unif-size_DF}) to $0.03$. 
The disease still becomes extinct on hypergraphs H1 and H2, but now it
seems to persist on the hypergraph H3. We suspect that the existence of a few very large hyperedges, which correspond to large gatherings of individuals in one location, in hypergraph H3 may be the reason for the disease's persistence.
It is worth pursuing this observation in more detail in future work, including through the systematic consideration of many more hypergraphs with nonuniform hyperedge sizes.

\begin{figure}
    \centering
    \begin{subfigure}[t]{0.45\textwidth}
        \centering
        \includegraphics[width=\linewidth]{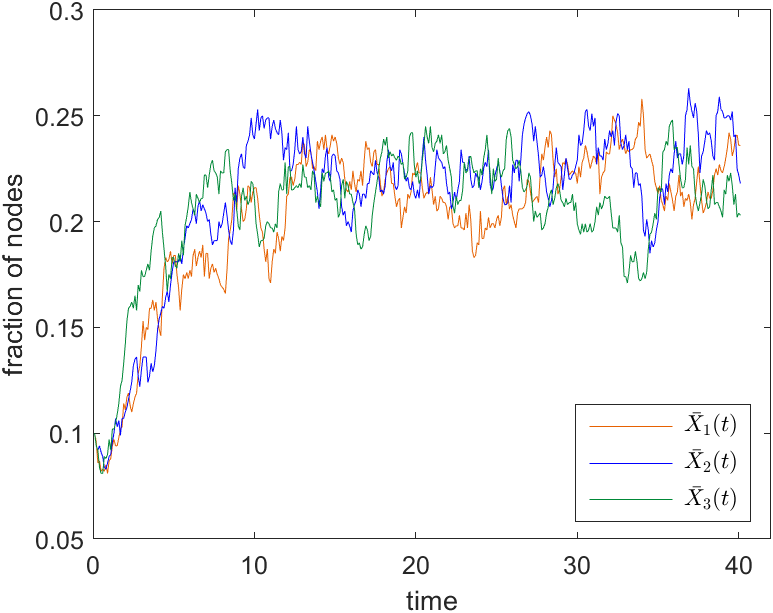} 
        \caption{Proportions of infected nodes} 
    \end{subfigure}
    \hfill
    \begin{subfigure}[t]{0.45\textwidth}
        \centering
        \includegraphics[width=\linewidth]{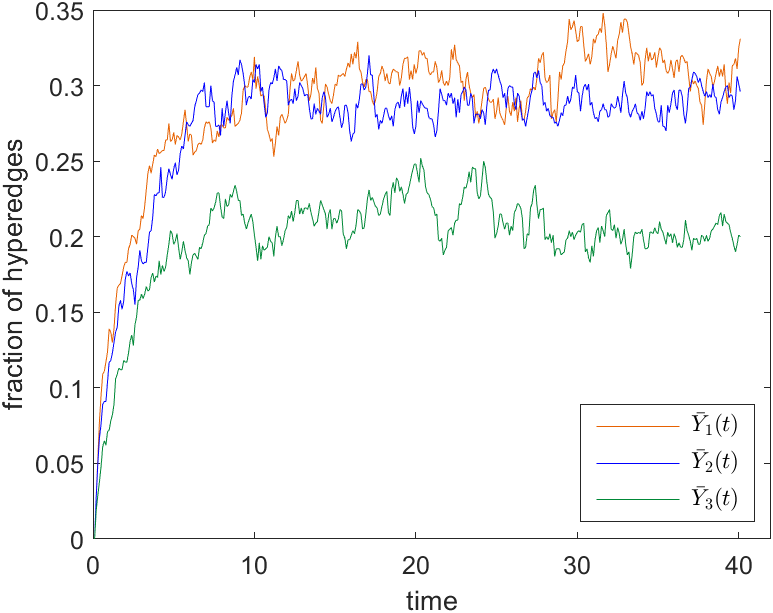} 
        \caption{Proportions of contaminated hyperedges} 
    \end{subfigure}
    \caption{Proportions of (a) infected nodes and (b) contaminated hyperedges, averaged over 10 simulations, in the individual-level stochastic model \eqref{eq:node_infected}--\eqref{eq:h_contaminated} on hypergraphs H1 (blue), H2 (red), and H3 (yellow). The infection rates are $\beta_d = 0.02$, $\beta_e = 0.08$, and $\sigma = 0.5$; and the recovery rates are $\gamma = \delta = 1$. The initial proportion of infected nodes is $p_0 = 0.1$.
    }\label{fig:non-unif-size_EN}
\end{figure}

\begin{figure}
    \centering
    \begin{subfigure}[t]{0.45\textwidth}
        \centering
        \includegraphics[width=\linewidth]{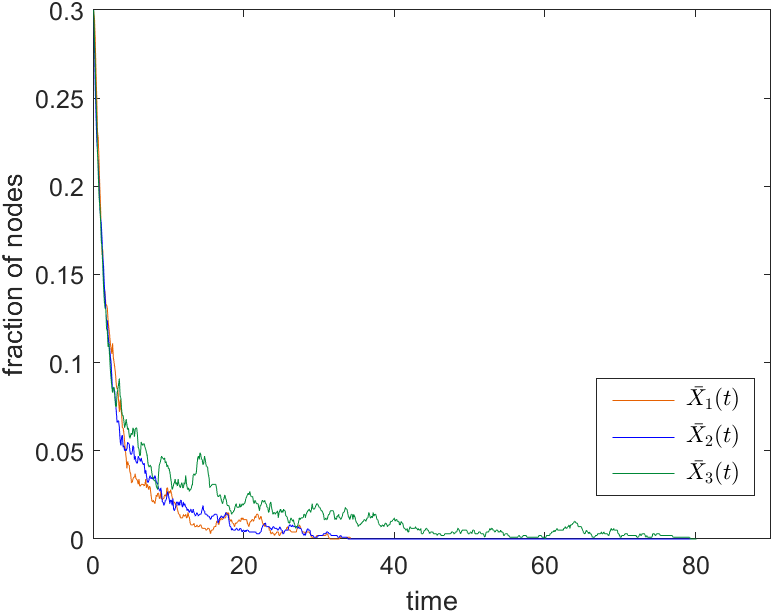} 
        \caption{Proportions of infected nodes} 
    \end{subfigure}
    \hfill
    \begin{subfigure}[t]{0.45\textwidth}
        \centering
        \includegraphics[width=\linewidth]{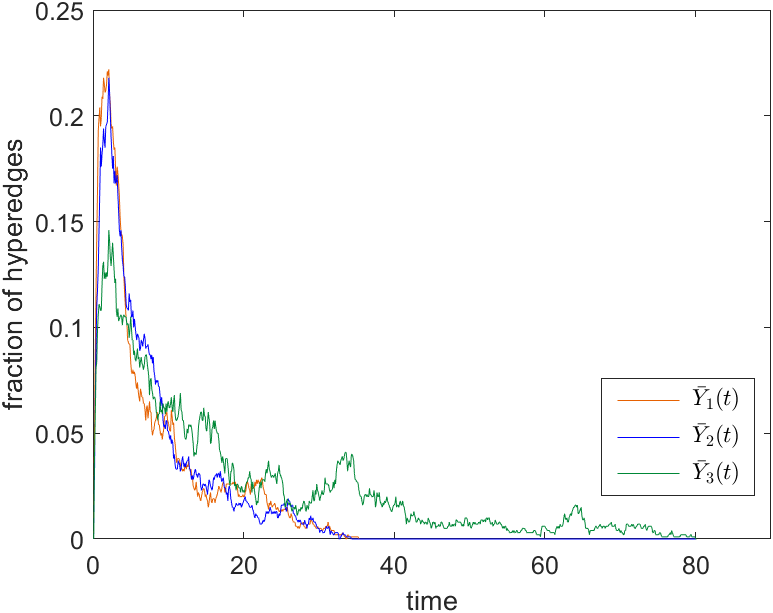} 
        \caption{Proportions of contaminated hyperedges} 
    \end{subfigure}
    \caption{Proportions of (a) infected nodes and (b) contaminated hyperedges, averaged over 10 simulations, in the individual-level stochastic model \eqref{eq:node_infected}--\eqref{eq:h_contaminated} on hypergraphs H1 (blue), H2 (red), and H3 (yellow). The infection rates are 
    $\beta_d = 0.02$, $\beta_e = 0.02$, and $\sigma = 0.5$; and the recovery rates are $\gamma = \delta = 1$. The initial proportion of infected nodes is $p_0 = 0.3$.
    }\label{fig:non-unif-size_DF}
\end{figure}

\begin{figure}
    \centering
    \begin{subfigure}[t]{0.45\textwidth}
        \centering
        \includegraphics[width=\linewidth]{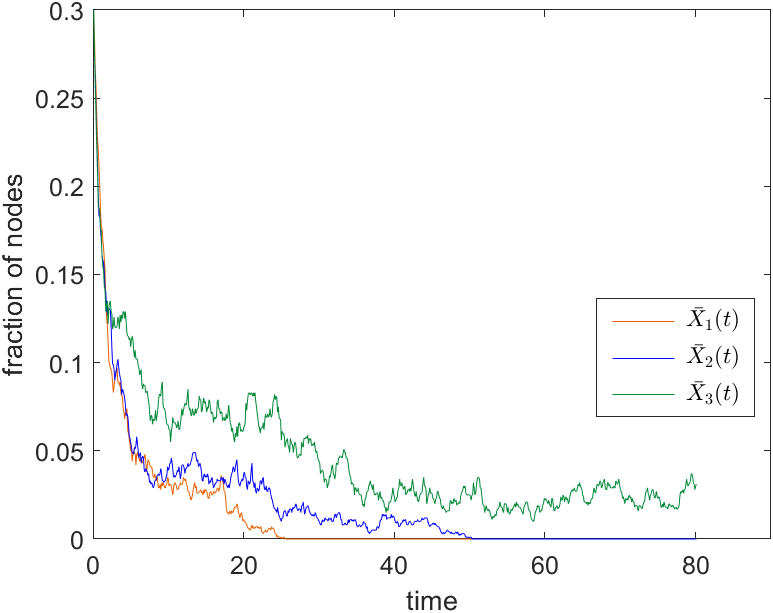} 
        \caption{Proportions of infected nodes} 
    \end{subfigure}
    \hfill
    \begin{subfigure}[t]{0.45\textwidth}
        \centering
        \includegraphics[width=\linewidth]{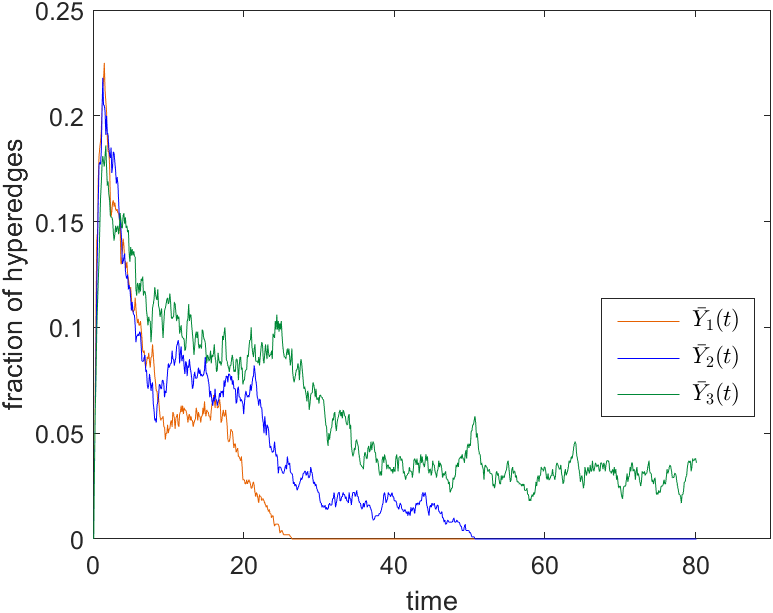} 
        \caption{Proportions of contaminated hyperedges} 
    \end{subfigure}
    \caption{Proportions of (a) infected nodes and (b) contaminated hyperedges, averaged over 10 simulations, in the individual-level stochastic model \eqref{eq:node_infected}--\eqref{eq:h_contaminated} on hypergraphs H1 (blue), H2 (red), and H3 (yellow). The infection rates are 
    $\beta_d = 0.02$, $\beta_e = 0.03$, and $\sigma = 0.5$; and the recovery rates are $\gamma = \delta = 1$. The initial proportion of infected nodes is $p_0 = 0.3$.
    }\label{fig:non-unif-size_DF*}
\end{figure}


\subsection{Effect of nonuniform environmental recovery rates on disease dynamics}\label{sec:eff_delta}

For regular $(2,s)$-uniform hypergraphs, we know from the expression \eqref{eqn:R0} for $R^{\mathrm{ru}}_0$ that if the contribution of the dyadic transmission mode
is $\frac{\beta_d k_d}{\gamma} < 1$, then increasing the environmental recovery rate $\delta$ can reduce $R^{\mathrm{ru}}_0$ 
below $1$ and thereby drive a disease to extinction in the mean-field model \eqref{eqn:meanfield}. Furthermore, if we fix all parameters except for $\delta$ and the hyperedge size $s$, then the extinction threshold value for $\delta$ is
\begin{equation}\label{eqn:delta_threshold}
	\delta_{\text{threshold}} = \frac{\beta_e\sigma g'(0)k_e}{\gamma(1 - \frac{\beta_d k_d}{\gamma})} s \,,
\end{equation}
which depends linearly on $s$.

We now numerically simulate the stochastic model \eqref{eq:node_infected}--\eqref{eq:h_contaminated} to examine the effect of $\delta$ for hypergraphs with nonuniform hyperedge sizes. Motivated by flexible ventilation designs (which one 
adjusts according to the number of individuals in a location) that was proposed in \cite{morawska2021paradigm}, we explore the impact of  nonuniform environmental recovery rates on disease dynamics. In this scenario, a contaminated hyperedge $\ell\in E_e$ becomes uncontaminated with environmental recovery rate $\delta_\ell$.

We generate an ER hypergraph 
with $N = 100$ nodes, $M_d = 1000$ edges, 60 size-4 hyperedges, 30 size-12 hyperedges, and 10 size-20 hyperedges. 
We simulate the stochastic model \eqref{eq:node_infected}--\eqref{eq:h_contaminated} on this hypergraph for three choices of the environmental recovery rates: (1) $\underline{\bm \delta}$, with $\delta_\ell = \delta_{\text{min}} = 1$ for all hyperedges; (2) $\overline{\bm\delta}$, with $\delta_\ell = \delta_{\text{max}} = 2$ for all hyperedges; and (3) $\bm\delta^*$, where we select each $\delta_\ell$ in a way that depends linearly on the size $s_\ell$ of hyperedge $\ell$. In particular, in the third scenario, $\delta_{\ell} = \delta_{\text{min}} + (\delta_{\text{max}}-\delta_{\text{min}})\frac{s_{\ell} - \min_\ell\{s_\ell\}}{\max_\ell\{s_\ell\} - \min_{\ell}\{s_\ell\}}$. From this linear relationship, 
hyperedges of size $4$, $12$, and $20$ have environmental recovery rates of $1$, $1.5$, and $2$, respectively.
In Figure \ref{fig:5}, we show the proportions of infected nodes and contaminated hyperedges for the three choices of enviromental recovery rates. For $\underline{\bm\delta}$, the disease seems to persist, whereas it becomes extinct for both $\overline{\bm \delta}$ and $\bm\delta^*$.

\begin{figure}
    \centering
    \begin{subfigure}[t]{0.45\textwidth}
        \centering
        \includegraphics[width=\linewidth]{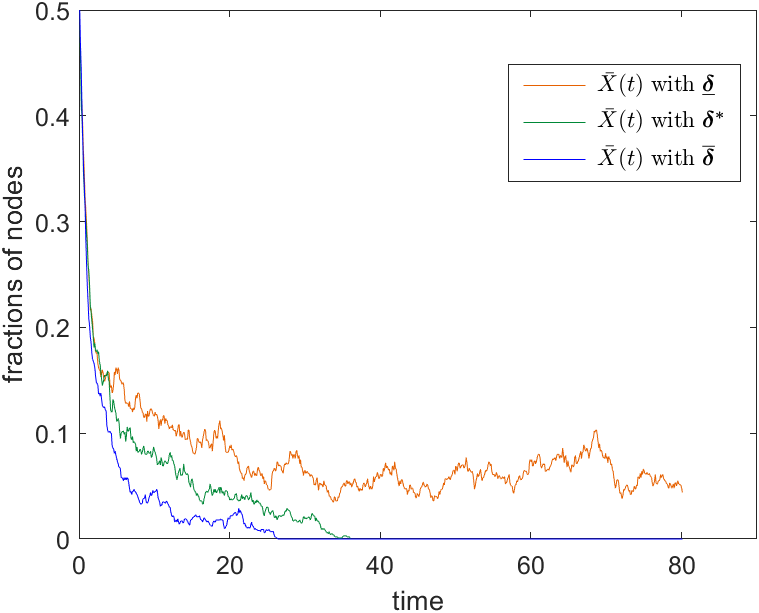} 
        \caption{Proportions of infected nodes}
    \end{subfigure}
    \hfill
    \begin{subfigure}[t]{0.45\textwidth}
        \centering
        \includegraphics[width=\linewidth]{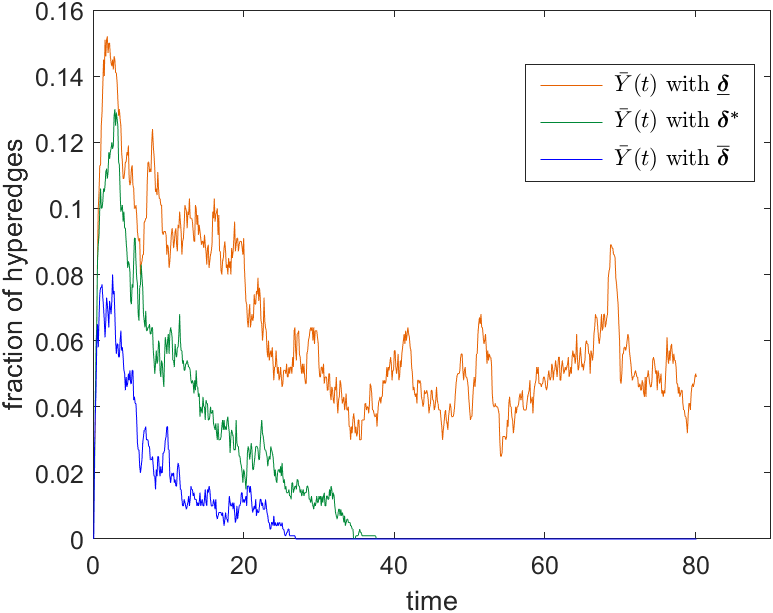} 
        \caption{Proportions of contaminated hyperedges} 
    \end{subfigure}
    \caption{Proportions of (a) infected nodes and (b) contaminated hyperedges, averaged over 10 simulations, in the individual-level stochastic model \eqref{eq:node_infected}--\eqref{eq:h_contaminated} on an ER hypergraph 
    with different distributions of environmental recovery rates. The infection rates are $\beta_d = 0.02$, $\beta_e = 0.08$, and $\sigma = 0.2$; and the individual recovery rate is $\gamma = 1$. The initial proportion of infected nodes is $p_0 = 0.5$.
    }\label{fig:5}
\end{figure}


\section{Conclusions and discussion}\label{sec:discussion}

We studied a susceptible--infected--susceptible (SIS) model on hypergraphs to model the spread of a disease on networks with both dyadic and polyadic interactions. A key novelty of our model is that we distinguish
between the two different modes of transmission. We incorporated this idea into our disease-spread model by assigning state variables
to both nodes and hyperedges. In our model, infected individuals not only spread a disease to other individuals through their social contacts but also ``contaminate" environments by releasing infected aerosols. In turn, contaminated environments can infect the individuals in them. After formulating our model, we derived two mean-field approximations of it and obtained approximate expressions for the basic reproduction number  for complete $(2,s)$-uniform hypergraphs and regular 
 $(2,s)$-uniform hypergraphs. We then showed using numerical simulations that our mean-field description provides
 good approximations of our original individual-level stochastic model
 on regular $(2,s)$-uniform hypergraphs and Erd\H{o}s--R\'enyi (ER) hypergraphs with uniform hyperedge-size distributions. We also performed a variety of other simulations to test the effects of heterogeneous hyperedge-size distributions and hyperedge-recovery rates on the disease dynamics.

Our results have a variety of useful
implications. First, for regular $(2,s)$-uniform hypergraphs, we saw that the basic reproduction number $R^{\mathrm{ru}}_0$ is the sum of contributions from the two distinct transmission modes.
Increasing the recovery rate of individuals can help reduce both terms in $R^{\mathrm{ru}}_0$. Additionally, if the contribution to infection from direct social contacts
is sufficiently small,
then reducing the hyperedge sizes (i.e., the sizes of gatherings) is an effective
way to reduce $R^{\mathrm{ru}}_0$. Second, for hypergraphs with nonuniform hyperedge-size distributions, our simulations of our stochastic model
suggest that having a few very large hyperedges can prolong the time until disease extinction or even allow a disease to persist that otherwise would die out.
Finally, our analysis and numerical simulations both point towards increasing environmental recovery rates (for example, by increasing ventilation and air filtration) as an important
way to drive a disease to extinction. Based on our simulations of our stochastic model,
we hypothesize that an effective way to do this is to increase the hyperedge-recovery rates (i.e., environmental recovery rates)
based on their size,
instead of increasing recovery rates uniformly across all hyperedges.

There are a variety of 
ways to generalize our work. First, one can explore different choices in the model formulation, such as by considering different ways (which we encoded in the sigmoid function $g$) that infected individuals contribute to the disease-transmission risk of an environment and by considering a variety of heterogeneous distributions of infection rates and recovery rates.
It is also relevant to consider different types of hypergraphs, such as generalizations of configuration models and hypergraphs that one constructs from empirical data. Furthermore, we assigned discrete states to hyperedges (which we treated as uncontaminated or uncontaminated), but it may be more realistic to
assign continuous states to hyperedges. In particular, there are many indoor transmission models (see, e.g., \cite{noakes2006modelling,riley1978airborne}) that one can incorporate into a disease-spread model like ours.
It is also relevant to derive mean-field approximations of our disease-spread model for hypergraphs with nonuniform hyperedge sizes. With such an approximation, one can in turn derive analytical results that give insight into the effects of hyperedge-size distributions and nonuniform hyperedge-recovery rates. Such analysis can support our numerical observations in Sections \ref{sec:eff_size} and \ref{sec:eff_delta}.
It is also worthwhile to study the coexistence of competing diseases with our framework. Specifically, it seems interesting to examine when the differences in individual infection rates, environmental infection rates, and environmental contamination rates promote or hinder the coexistence of different diseases.


\section*{Acknowledgements}

We thank Christine Craib
for helpful discussions.


\bibliographystyle{abbrv}
\bibliography{ref-v5}


\end{document}